\DeclareMathOperator{\tr}{tr}
\DeclareMathOperator{\pr}{Pr}
\DeclareMathOperator{\doo}{do}
\newcommand{\id}{\mathds{1}}
\newcommand{\ket}[1]{\left| #1 \right\rangle}
\newcommand{\ketbra}[2]{\left|#1\middle\rangle\middle\langle#2\right|}
\newcommand{\proj}[1]{[#1]}
\newcommand{\Bra}[1]{{ \langle \! \langle{#1}\vert }}
\newcommand{\Ket}[1]{{ \vert {#1}  \rangle \!  \rangle}}
\newcommand{\KetBra}[2]{{\Ket{#1}\!\Bra{#2} }}
\newcommand{\Proj}[1]{[[#1]]}
\newcommand{\overbar}[1]{\mkern 1.5mu\overline{\mkern-2.5mu#1\mkern-1.5mu}\mkern 1.5mu}
\newtheorem{theorem}{Theorem}
\newtheorem*{theorem*}{Theorem}
\newtheorem{definition}{Definition}
\newtheorem{corollary}[theorem]{Corollary}
\newcommand{\lin}{{\cal L}}
\newcommand{\ps}{P\!S}
\newcommand{\psv}{ps}
\newcommand{\opa}{S}
\newcommand{\ech}{R}
\newcommand{\out}{\textsc{\footnotesize out}\xspace}
\newcommand{\IN}{\textsc{\footnotesize in}\xspace}
\newcommand{\lab}{${\cal L}= \left\{L_j=I_j\otimes O_j\right\}_{j=1}^n$\xspace}
\newcommand{\qdag}{MQCM\xspace}
\newcommand{\qmc}{$\left\langle G, W\right\rangle$\xspace}
\newcommand{\blue}[1]{\textcolor{blue}{#1}}
\begin{document}

\title{Quantum causal modelling}
\author{Fabio Costa}
\email{f.costa@uq.edu.au}
\affiliation{Centre for Engineered Quantum Systems, School of Mathematics and Physics, The University of Queensland, St Lucia, QLD 4072, Australia}
\author{Sally Shrapnel}
\email{s.shrapnel@uq.edu.au}
\affiliation{School of Historical and Philosophical Inquiry, The University of Queensland, St Lucia, QLD 4072, Australia}
\affiliation{School of Mathematics and Physics, The University of Queensland, St Lucia, QLD 4072, Australia}

\date{\today}
\begin{abstract}
Causal modelling provides a powerful set of tools for identifying causal structure from observed correlations.  It is well known that such techniques fail for quantum systems, unless one introduces `spooky' hidden mechanisms. Whether one can produce a genuinely quantum framework in order to discover causal structure remains an open question. Here we introduce a new framework for quantum causal modelling that allows for the discovery of causal structure. We define quantum analogues for core features of classical causal modelling techniques, including the causal Markov condition and faithfulness. Based on the process matrix formalism, this framework naturally extends to generalised structures with indefinite causal order.
\end{abstract}
\maketitle

\section{Introduction}
\vspace{-0.19cm}
\subsection{Why cause}
\vspace{-0.19cm}
Causality often appears as an axiom in the formulation of physical theories, typically as the assumption that effects cannot precede their causes. However, what is often left unclear is which general principles should be used to \emph{define} causal relations and relata. 

The recent technical developments of causal modelling provide a consistent mathematical formalism that relates causal influence to the possibility of signalling \cite{spirtes2000causation, pearlbook}. This approach provides an explanation for why causation is not `merely correlation' and has shed light on contentious issues across a variety of scientific fields \cite{woodward2003making, sloman2009causal, Koller+Friedman:09}.

Causal models rest on essentially two concepts: the existence of autonomous \emph{mechanisms}, responsible for producing the observed correlations, and the possibility of \emph{interventions}, which modify part of the mechanism according to some factor external to the model.  It is the latter notion that differentiates causal relations from simple correlations. Together, mechanisms and interventions define what is known as the \emph{causal Markov condition}, a constraint imposed on the structure of a causal model.

Causal models have proved tremendously useful for understanding causal relations across a broad range of disciplines. They provide a powerful mathematical toolbox for efficiently extracting causal information from observed data, Figs.~\ref{blackbox}, \ref{watermill}.  They enable scientists to predict, manipulate and explain. The application of such models to physics represents a promising direction for a deeper and unified understanding of causality \cite{spekkens2015}. 

\begin{figure}%
\includegraphics[width=0.63\columnwidth]{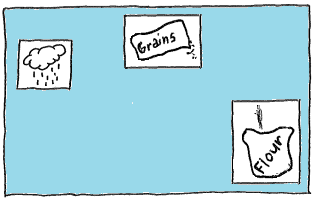}%
\caption{\textbf{Correlations.} A typical task in science is to provide a model that can explain the correlations between some observed variables.}%
\label{blackbox}%
\end{figure}

Unfortunately, one cannot straightforwardly apply these methods to quantum systems. Causal models presume the existence of objective properties that can be observed and manipulated locally. Such assumptions are incompatible with quantum mechanics, as most prominently demonstrated by Bell's theorem  \cite{bell64}. Of course, one \emph{can} apply the classical methodology, but only at the expense of introducing hidden, fine tuned \cite{woodlesson2012} mechanisms, such as action at a distance \cite{Naeger2015} or retrocausality \cite{Evans01062013}. Although seemingly conservative, such approaches violate the very pillars of causal reasoning: they posit the existence of mechanisms that cannot be discovered and variables that cannot be controlled. 

An alternative approach is to reformulate causal models from the ground up, in a way that makes direct use of the quantum formalism. Indeed, quantum theory can be seen as a theory of interventions \cite{Peres1999kz} and, from the perspective of quantum information, causal relations are identified with \emph{signalling}, in direct agreement with the intuitions underlying classical models. Additionally, quantum circuits \cite{deutsch1989quantum} are often interpreted as being representative of causal structure. This suggests that quantum causal models should be rather natural within quantum theory.
However, despite much work in this direction \cite{tucciquantum1995, Leifer2006, Laskey2007, Leifer2013, cavalcanti2014modifications,
fritzbeyond2015, Henson2015, pienaar2014graph, chavesinformation2015, ried2015quantum}, a general framework for quantum causal models has not been found. As identified in Ref.~\cite{chavesinformation2015}, a key missing step is the formulation of a quantum version of the causal Markov condition. In classical models, this condition enables one to use observational and interventionist inference to deduce causal structure from data. Arguably, a formalism for quantum causal modelling would require a similar condition.
\begin{figure}%
\includegraphics[width=\columnwidth]{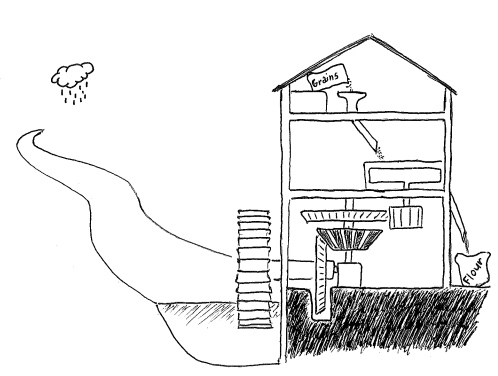}%
\caption{\textbf{Causal model}. A causal model produces correlations in terms of a mechanism. The rain fills the stream that moves the wheel, activating the mechanism that grinds the grains to produce flour.}%
\label{watermill}%
\end{figure}
\subsection{This work}
Here a complete framework for quantum causal modelling is presented. The framework rests on quantum definitions of mechanisms and spatio-temporally localised interventions. A quantum version of the causal Markov condition is defined, allowing for the possibility of \emph{causal discovery}, namely, the identification of causal structure from empirical data. Classical causal models are recovered as a limiting case of the formalism. Furthermore, the formalism allows for natural quantum extensions of classical concepts, such as faithfulness and the distinction between direct and indirect causes.

In the present framework, causal relations are identified with the possibility of signalling, that is, with the possibility of modifying the statistics of an event through intervention on a different event. 
The role of interventions was not clearly identified in previous works, which rather attempted to generalise various aspects of `causally neutral' Bayesian networks~\cite{tucciquantum1995, Leifer2006, Leifer2013, cavalcanti2014modifications,
fritzbeyond2015, Henson2015, pienaar2014graph, chavesinformation2015}. Without a definition of intervention, 
it is not clear in what sense such generalisations should be thought of as \emph{causal}.

Although some notion of interventions is present in Ref.~\cite{Laskey2007},
in that formalism not all correlations have a causal explanation, as a new form of `contemporaneous' correlation is postulated to account for entanglement. By contrast, in the causal models introduced here all correlations arise from direct, indirect, or common causes.

Classical causal models use probability distributions and it is natural to ask whether quantum causal models can be defined by simply replacing such distributions with density matrices. However, a classical joint probability distribution can describe common-cause as well as direct-cause relations, whilst a joint density matrix cannot describe direct-cause relations between quantum events \cite{Leifer2013}.

Attempts have been made to generalise density matrices in order to solve this problem~\cite{Leifer2013, fitzsimonsquantum2015}. In such attempts each quantum event is associated with a single Hilbert space, and it has been shown that such approaches face serious problems \cite{horsmanstateintime2016}.
In contrast, the present work identifies a quantum event with a local operation and thus is associated with a \emph{pair} of Hilbert spaces, describing respectively the input and the output of the operation. This follows the quantum networks~\cite{chiribella09b} and process matrix~\cite{oreshkov12} formalisms.
Ultimately, `splitting' the Hilbert space of an event is what allows for a unified representation of arbitrary causal structures. This fact was also recognised in Ref.~\cite{ried2015quantum}, which presented a special case of the general framework introduced here.

Other recent works have focussed on the correlations between classical variables produced in experiments involving quantum systems~\cite{fritzbeyond2015, Henson2015, pienaar2014graph}. These works adopt a \emph{device independent} perspective, where the observed data contain no information about the functioning of or the physical theory describing the instruments used. 
Such frameworks cannot distinguish direct from indirect causes, or more generally formulate a quantum version of the causal Markov condition. 
Such limitations are overcome here by adopting a device dependent quantum formalism, where the description of events contains all relevant information regarding the physical devices that produced them.

The work is organised as follows. In Sec.~\ref{classical} the framework of classical causal models is briefly reviewed. The quantum formalism is introduced in Sec.~\ref{quantum}, where \emph{Markov  quantum causal models} are defined and it is shown that non-Markovian, causally-ordered models can always be reduced to Markovian ones. In sec.~\ref{discovery}, faithful causal models are defined and the basic tools for \emph{quantum casual discovery} introduced. It is shown in Sec.~\ref{limit} that classical causal models are recovered as a limiting case of quantum ones. Possible extensions of the formalism to indefinite and cyclic causal structures are briefly discussed in Sec.~\ref{beyond}.
An introduction to the framework presented here is also considered from a broader philosophical perspective in Ref.~\cite{pittphilsci11751}.

\section{Classical causal models}
The brief review of classical causal models presented here is based on the monograph by Pearl~\cite{pearlbook}.
\label{classical}

\subsection{Mechanisms}

In classical physics, an event is identified with the value assumed by some property of a system at a given moment in time (denoted $x, y\dots$). \emph{Potential} events are described by random variables ($X, Y,\dots$), with values distributed according to some probability $\pr(X=x,\,Y=y,\dots)$. The short-hand $P(x)\equiv\pr(X=x)$ is used when there is no risk of confusion.

The building blocks of a causal model are mechanisms that connect events to each other according to classical laws of dynamics, see Fig.~\ref{figmechanism}. 
Deterministic laws imply a functional dependence between variables:
\begin{equation}
Y = f(X).
\label{deterministic}
\end{equation}
\begin{figure}
\includegraphics[width=0.55\columnwidth]{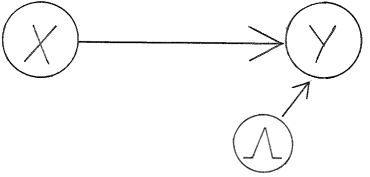}
\vspace{-1em}
\caption{\textbf{Classical Mechanism.} A classical mechanism is defined by a function $Y=f(X)$, relating random variables. The variables are depicted as circles and the mechanisms depicted via connecting arrows. External noise is represented by an additional variable $\Lambda$ with an associated probability distribution. Noise variables are typically not drawn explicitly and the arrows are interpreted as generic stochastic processes, defined by a conditional probability $P(Y|X)$.}\label{figmechanism}
\end{figure}
Unknown noise affecting the mechanism can be modelled as an additional variable $\Lambda$ with an associated probability distribution $P(\lambda)$. The functional relation is then $Y=f(X,\Lambda)$, which defines a stochastic mechanism that transforms $X$ to $Y$ according to the conditional probability
\begin{equation}
\begin{split}
P(y|x)= &\sum_{\lambda} P(y|x,\,\lambda)P(\lambda),\\
 P(y|x,\,\lambda) = &\,\delta_{y\,f(x,\,\lambda)},
\end{split}
\label{conditional}
\end{equation}
where $\delta_{a b}$ is the Kronecker delta.

\subsection{Interventions}

What distinguishes a causal mechanism from a generic conditional probability is the possibility of changing it via \emph{intervention}, Fig.~\ref{intervention}.
Causal relations are indeed defined by the possibility of intervening on a variable in the model and forcing it to assume some chosen value, thereby overriding the natural process that generates it. If $X$ is a cause of $Y$, then setting $X$ to a particular value will not change the conditional probability $P(y|x)$. On the other hand, setting $Y$ to a specific value requires breaking the process that generates $Y$, thus removing the correlations between $X$ and $Y$.
\begin{figure}
\includegraphics[width=0.55\columnwidth]{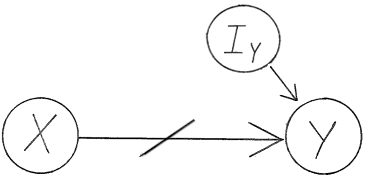}
\vspace{-1em}
\caption{\textbf{Interventions.} A \emph{causal} mechanism can be altered through an external intervention. 
An intervention can be represented as an additional random variable $I_Y$, which is commonly not drawn explicitly.}
\label{mechanism}
\end{figure}

More generally, an intervention on a variable $Y$ can be any modification of the mechanism that produces $Y$, without necessarily breaking all pre-existing causal influences. This can be modelled by some control variable $I_Y$ that parametrizes the mechanism producing $Y$. Intervention variables can be treated as random variables themselves, thus causal mechanisms are defined as conditional probabilities
\begin{equation}
\pr(Y=y|X=x,\,I_Y=i_Y).
\label{causalmech}
\end{equation}
If the only type of intervention considered is ``set $Y$ to value $y$'', the intervention variable takes as possible values
\begin{equation}
I_Y = \textrm{idle},\quad I_Y = \doo(y),
\label{intervention}
\end{equation}
for all possible values $y$ of $Y$. The `do' intervention is formalised as
\begin{equation}
\begin{array}{l}
\pr(Y=y|X=x,\,I_Y=\textrm{idle}) \equiv P^*(y|x), \\
\pr(Y=y|X=x,\,I_Y=\doo(y')) = \delta_{y y'},
\end{array}
\label{modularmech}
\end{equation}
where $P^*$ represents the `natural' probability, i.e.\ the one observed if no intervention is made, often simply written as $P(y|x)$.\footnote{Note the small abuse of notation, i.e.\ $P(y|x)$ does not represent the marginal of $P(y,i_Y|x)$, as the notation would imply.} 

It is of no particular significance whether the intervention is performed by a human agent, a pre-programmed device, or nature itself.
Intervention variables simply represent parameters that have to be fixed independently of the variables that are included in the model itself, in order to extract predictions from it.
The possibility of separating internal variables from external parameters is the fundamental assumption that allows for a causal interpretation of the model.
Accordingly, a variable $X$ is interpreted as a \emph{cause} of another variable $Y$ if correlations exist between $I_X$ and $Y$.

\subsection{Causal models} 
A causal model captures the causal relations between a set of variables in terms of mechanisms and interventions, Fig.~5.
The qualitative structure of cause-effect relations in a model defines a \emph{causal structure}, which is represented as a directed acyclic graph.
\begin{definition}[DAG]
A\emph{ directed graph }is a pair $G=\left\langle {\cal V}, {\cal E}\right\rangle$, where ${\cal V}=\left\{V_1,\dots,V_n\right\}$ is the set of \emph{vertices} and ${\cal E}\subset {\cal V} \times {\cal V}$ is a set of ordered pairs of vertices, representing \emph{directed edges}. A \emph{directed path} is a sequence of directed edges $\left\{e_k=\left(V_k^1,V_k^2\right)\in {\cal E}\right\}_{k=1}^{m}$ with $V_k^2=V_{k+1}^1$ for $k=1,\dots,m-1$. A \emph{directed cycle} is a directed path with $V_m^2=V_1^1$. A \textbf{\emph{directed acyclic graph (DAG)}} is a directed graph that contains no directed cycles.
\end{definition}

A functional causal model is defined by a set of functions that determine the observed variables given their direct causes and possibly some noise. The Markov assumption states that all noise variables are uncorrelated, leading to the following definition:
\begin{figure}
\begin{tabular}{lll}
(a) & \qquad & (b) \\
\includegraphics[width=0.39\columnwidth]{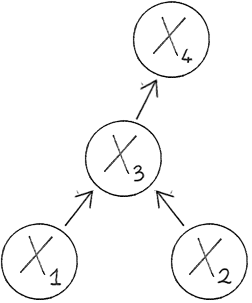} & \qquad &
\includegraphics[width=0.48\columnwidth]{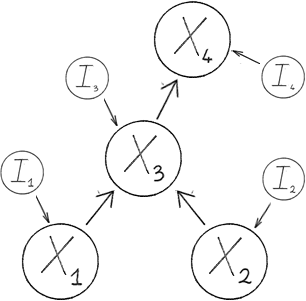}
\end{tabular}
\label{causalmodels}
\vspace{-1em}
\caption{\textbf{Classical causal models.} (a) The \emph{causal structure} of a classical causal model is represented by a \emph{directed acyclic graph}, where the nodes represent random variables and the edges classical mechanisms. A joint probability compatible with a causal structure satisfies the \emph{Markov condition}, Eq.~\eqref{markov}. (b) What makes a model \emph{causal} is the possibility, at least in principle, of performing localised interventions on each variable.}
\end{figure}
\begin{definition}[Classical causal model]
A \textbf{\emph{classical causal model}} for a set of random variables ${\cal X}=\left\{X_j\right\}_{j=1}^n$ is defined by 
\begin{enumerate}
	\item A DAG that has the random variables as vertices, $G = \left\langle {\cal X}, {\cal E}\right\rangle$;
	\item A list of conditional probabilities for each variable given its parents, $P(x_j|pa_j)$, $j=1,\dots,n$, where the set of \emph{parents} $P\!A_j$ of a vertex $X_j$ in a DAG is defined as the set of vertices $X$ with an edge from $X$ to $X_j$: 
	\begin{equation}
	P\!A_j:=\left\{X\in {\cal X}| (X,X_j)\in {\cal E}\right\}. 
	\label{parents}
	\end{equation}
\end{enumerate}
\end{definition}
The conditional probabilities $P(x_j|pa_j)$ represent autonomous causal mechanisms that can be modified through interventions. As mentioned above, the intervention variables are often left implicit, i.e.\ it is understood that $P(x_j|pa_j)\equiv \pr(X_j = x_j|P\!A_j = pa_j, I_j= \textrm{idle})$. 

A causal model generates a probability distribution 
\begin{equation}
P(x_1,\dots,x_n)=\prod_{j=1}^nP(x_j|pa_j)
\label{markov}
\end{equation}
for the observed random variables. Decomposition \eqref{markov} is called the \emph{Markov condition} and can be used to test the compatibility of given data with a causal structure, even in the absence of interventions. However, without additional assumptions, the Markov condition does not identify a unique DAG and the causal structure remains underdetermined.\footnote{In fact, even under additional assumptions such as Faithfulness and Causal Sufficiency, the causal structure will often remain underdetermined. In such situations,  interventions are required to uncover the correct structure.}  Explicitly, the causal information of a model is encoded in the conditional probability
\begin{equation}
P(x_1,\dots,x_n|i_1,\dots,i_n)=\prod_{j=1}^nP(x_j|pa_j,i_j),
\label{causalmarkov}
\end{equation}
which reduces to condition \eqref{markov} when $I_j=\textrm{idle}$ for all $j$.
Eq.~\eqref{causalmarkov} can be called \emph{causal Markov condition}, although often this is defined as condition \eqref{markov}, leaving intervention variables and causal interpretation implicit. Eq.~\eqref{causalmarkov} can also be used to define causal models in the language of \emph{influence diagrams} \cite{dawid2002}.

If a set of variables $Y_1,\dots,Y_r$ in a causal model is not observed, the marginal probability $P(x_1,\dots,x_n)=\sum_{y_1,\dots,y_r}P(x_1,\dots,x_n,y_1,\dots,y_r)$
does not necessarily satisfy the Markov condition, and the unobserved variables are called \emph{latent}, see Fig.~\ref{hidden}. Given a probability that does not fulfill the Markov condition, it is always possible to extend the set of variables to include latent variables that restore the condition.
\begin{figure}
\begin{tabular}{lll}
(a) &\; & (b) \\
\includegraphics[width=0.4\columnwidth]{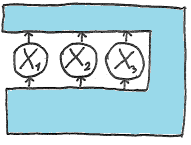} &\; &
\includegraphics[width=0.4\columnwidth]{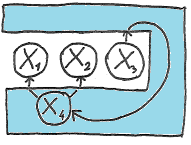}
\end{tabular}
\vspace{-1em}
\caption{\textbf{Latent variables.} (a) A probability $P(x_1,x_2,x_3)$ that does not satisfy the Markov condition can be seen as an unknown mechanism that connects the variables. In this case, it is natural to look for an extended model (b), including latent variables (here $X_4$) such that $P(x_1,x_2,x_3,x_4)$ does satisfy the Markov condition for some causal structure.}
\label{hidden}
\end{figure} 
Implicit in such a move is the assumption that  latent variables can be accessed at least in principle and their causal role verified through interventions. In the context of quantum correlations, such an extension leads to hidden variable models. However, a causal interpretation of such models implies a deviation from quantum mechanics:  intervening on hidden variables would allow signalling at a distance, in contradiction with quantum predictions. Models in which the hidden variables cannot be accessed, even in principle, are a logical possibility, but they do not have a causal interpretation in the interventionist sense that underpins classical causal modelling.

\section{Quantum causal models}
\label{quantum}
The \emph{process matrix formalism} reviewed below was introduced in Ref.~\cite{oreshkov12} as an extension of causally ordered \emph{quantum networks} \cite{gutoski06, chiribella08, chiribella09b}. Only finite-dimensional systems are considered here, a generalisation to infinite dimensions was studied in Ref.~\cite{Giacomini2015}.

\subsection{Quantum events}

In contrast to classical systems, quantum systems do not possess objective properties that can be assigned values prior to and independently of measurement. Thus, the relata of causal relations are not classical random variables---a genuinely quantum notion of event is needed. A quantum event can be identified with all available information about a system localised in space and time. Typical quantum events are ``the spin was found aligned with the $z$ axis'' and ``the spin was rotated by $\pi$ around the $z$ axis''.
A combination of the type ``the spin was found aligned with the $z$ axis and then reprepared aligned with the $x$ axis'' can be considered as an event too. In general, a quantum event is associated with an operation, which can be deterministic (as a fixed transformation) or non-deterministic (as one associated with the outcome of a measurement).

Formally, such a quantum operation is represented by a \emph{completely positive} (CP) map ${\cal M}:A_I\rightarrow A_O$, where input and output spaces are the spaces of linear operators over input and output Hilbert spaces, $A_I\equiv\lin({\cal H}^{A_I})$, $A_O\equiv \lin({\cal H}^{A_O})$, respectively (here identified with the corresponding matrix spaces).
Complete positivity means that, for arbitrary dimensions of an ancillary system $A'$, the map ${\cal I}_{A'}\otimes{\cal M}$ transforms positive operators into positive operators, where ${\cal I}_{A'}$ is the identity map on
${A'}$. Input and output spaces can have different dimensions $d_{A_I}$, $d_{A_O}$, as ancillas can be added or discarded. 

Using the Choi-Jamio{\l}kowski isomorphism \cite{jamio72, choi75}, a CP map can be represented as a matrix\footnote{This definition aligns with the convention in Ref.~\cite{oreshkov12}. Different choices, amounting to a transposition or partial transposition, are found elsewhere.}:
\begin{eqnarray}
\label{CJ}
M^{A_I A_O} =& \sum_{j\,l}\ketbra{l}{j}^{A_I}\otimes \left[{\cal M}(\ketbra{j}{l})^{A_O}\right]^T , \\ \label{inverseCJ}
{\cal M}(\rho)^{A_O} =& \left[\tr_{A_I}\left(\rho^{A_I}\otimes \id^{A_O}\cdot M^{A_IA_O}\right)\right]^T,
\end{eqnarray}
where $\left\{\ket{j}\right\}_{j=1}^{d_{A_I}}$ is an orthonormal basis in ${\cal H}^{A_I}$ and $^T$ denotes transposition in that basis\footnote{The convention used here is that the superscript denotes the space in which a matrix is defined---hence $M^{X}$ is a short-hand for $M\in X$ (and ${\cal J}^X$ for ${\cal J}\subset X$, if ${\cal J}$ is a set of matrices)---but not the matrix itself: $T^{AB}=M^A\otimes M^B$ means that $T$ is the tensor product of two equal matrices $M$. Identity matrices are sometimes omitted: $M^A\otimes\id^B\equiv M^A$, etc.}. According to Choi's theorem, ${\cal M}$ is CP if and only if $M^{A_I A_O}\geq 0$. Henceforth CP maps will be identified with their Choi-Jamio{\l}kowski representation unless otherwise stated. 

An example of a quantum event is a projective measurement that yields as an outcome some pure state $\ket{\psi}$ and resets the systems in the same state. As a CP map, this is represented as 
\begin{equation}
M^{A_IA_O}_{\psi} = \proj{\psi}^{A_I}\otimes\left(\proj{\psi}^{A_O}\right)^T,
\label{projective}
\end{equation}
where the notation $\proj{\psi}:=\ketbra{\psi}{\psi}$ is used \cite{CohenWhen11}. Another example is a transformation of the system occurring with unit probability, for example as defined by a unitary matrix $U$. Introducing the notation \cite{royer_wigner_1991, braunstein_universal_2000}
\begin{equation}
\Ket{U}:= \sum_j \ket{j}U\ket{j},\quad \Proj{U}:=\KetBra{U}{U},
\label{doubleket}
\end{equation}
the local event corresponding to the unitary transformation $U$ is given by
\begin{equation}
M^{A_IA_O}_{U} = \Proj{U^*}^{A_I A_O},
\label{localunitary}
\end{equation}
where $^*$ is the complex conjugation in the chosen basis. A more general deterministic transformation is a CP and trace-preserving (CPTP) map, represented as
\begin{equation}
M^{A_IA_O}\geq 0,\quad \tr_{A_O}M^{A_IA_O}=\id^{A_I}.
\label{localCPTP}
\end{equation}
A CPTP map is also called a \emph{quantum channel}.

The space of \emph{potential events} is identified with the set of CP maps between an input ($A_I$) and an output ($A_O$) space, which is isomorphic to (the cone of positive matrices in) the space $A_I\otimes A_O$. Input and output spaces can be respectively identified with the past and the future space-like boundaries of the space-time region where the event takes place, Fig.~\ref{laboratory}, as in Oeckl's `general boundary' formalism~\cite{Oeckl2003318}. The space of potential events, which plays a role analogous to a classical random variable, is called a \emph{local laboratory}. 
\begin{figure}%
\includegraphics[width=0.6\columnwidth]{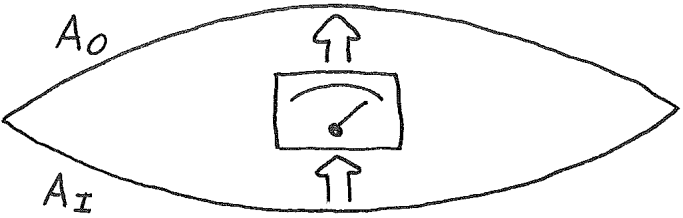}%
\caption{\textbf{Local laboratory.} A local laboratory represents a quantum system in a region of space-time with space-like boundaries. Past and future boundaries are identified with an input and an output state space, $A_I$ and $A_O$, respectively. The local laboratory is identified with the product space $A_I\otimes A_O$. A \emph{quantum event} is a quantum operation that takes place in the local laboratory and is represented as a completely positive map from input to output space.}%
\label{laboratory}%
\end{figure}

Whilst the name implies some anthropocentric element, the term `local laboratory' is taken to represent an observer-independent spatio-temporal region where quantum events can take place. In direct analogy to the classical case \cite{pricecausation13}, one can argue whether it is possible to entirely remove the concept of observer, though we shall not enter into such discussions here. 
 
\subsection{Mechanisms}
As in the classical case, a causal model is built on the notion of physical mechanisms that are responsible for mediating causal influences between events. A quantum mechanism maps the output space of a local laboratory, say $A_O$, to the input space of another one, say $B_I$. The analogue of a deterministic mechanism is a unitary\footnote{In fact, the direct analogue of a deterministic mechanism is an isometry, for which $V^{\dag}V=\id$ but it is not necessarily true that $VV^{\dag}=\id$. Unitaries represent reversible deterministic mechanisms.} map $U$, which transforms states as $\rho \mapsto U\rho U^{\dag}$. External noise can be described by an interaction with an environment which is then traced out, leading to the more general definition of a mechanism as a CPTP map.

 In the present approach, CPTP maps representing events in local laboratories are distinguished from CPTP maps representing connecting mechanisms. This distinction is reflected by a different representation of the two: a CPTP map $\cal T$ corresponding to a connecting mechanism is described by the transpose of representation \eqref{CJ}:
\begin{eqnarray} 
\label{CJt}
T^{A_O B_I} =& \sum_{j\,l}\ketbra{j}{l}^{A_I}\otimes {\cal T}(\ketbra{j}{l})^{A_O} , \\ \label{inverseCJt}
{\cal T}(\rho)^{B_I}=& \tr_{A_O}\left[\rho^{A_O}\cdot T^{A_O^T B_I}\right],\\ 
T\geq 0,& \tr_{B_I} T^{A_O B_I} = \id^{A_O},
\label{CPTP}
\end{eqnarray}
where $T^{X^TY}$ denotes partial transposition on subsystem $X$. 

As an easy example, a unitary transformation from the output space of $A$ to the input space of $B$ is represented by the projector $\Proj{U}^{A_O B_I}$,  the transpose of \eqref{localunitary}. A mechanism that connects a space of dimension $1$ to the input of a local laboratory
is given by a density matrix $\rho^{A_I}\geq 0$, $\tr \rho=1$, which corresponds to the situation where state $\rho$ is sent to the input space $A_I$.

\subsection{Interventions}

Recall, the possibility of intervention is required to give causal models empirical meaning. Quantum interventions can be formalised using the notion of \emph{instrument} \cite{davies70}, which represents the collection of all possible events that can be observed given a specific choice of probing the system. An intervention is thus defined as a \emph{choice of instrument}. 
Given a local laboratory $A_I\otimes A_O$, an instrument is formally defined as a set of CP maps that sum up to a CPTP map:
\begin{equation}
\begin{split}
{\cal J} = \left\{M_{x}^{A_IA_O}\right\}_{x=1}^m&, \quad M_{x}^{A_IA_O}\geq 0,  \\
\tr_{A_O}\sum_{a=1}^mM_{x}^{A_IA_O}=\id^{A_I}&.
\end{split}
\label{instrument}
\end{equation}
(The trace-preserving condition for the sum guarantees that probabilities sum up to $1$.)

A typical example of an instrument is the measurement of the incoming system according to a positive operator valued measure (POVM) $\left\{E_x\right\}_{x=1}^m$, $E_x^{A_I}\geq 0$, $\sum_{x=1}^mE_x^{A_I}=\id^{A_I}$, followed by the preparation of a state $\rho$. Using representation \eqref{CJ}, this is given by
\begin{equation}
{\cal J} = \left\{E_{x}^{A_I}\otimes(\rho^{A_O})^T\right\}_{x=1}^m.
\label{measureprepare}
\end{equation}
By keeping the POVM fixed and choosing different states, one breaks the flow of information through the local laboratory and obtains the equivalent of classical `do' interventions.
As discussed in Sec.~\ref{limit} below, classical 'idle' interventions correspond to quantum projective measurements in a fixed basis. These, however, are not interpreted as passive observations in the quantum formalism. Indeed, it is one of the essential features of quantum mechanics that measurements necessarily disturb the system.

\subsection{Process matrices and causal relations}
\label{processmatrix}
The general situation of interest is described by a set of local laboratories \lab,
interpreted as representing $n$ disjoint space-time regions, each bounded by two space-like surfaces. In an individual run of an experiment, instruments ${\cal J}^{L_1}_{1},\dots,{\cal J}^{L_n}_{n}$ are implemented in these local laboratories and the corresponding events recorded. The events are described by CP maps $M^{L_1}_{1},\dots,M^{L_n}_{n}$. 

By assuming the local validity of quantum mechanics, with no further assumption about causal relations between local laboratories, it is possible to prove that the probability for such events to occur is given by the generalised Born rule \cite{oreshkov12}
\begin{multline}
P(M^{L_1}_{1},\dots,M^{L_n}_{n}|{\cal J}^{L_1}_{1},\dots,{\cal J}^{L_n}_{n}) \\ =\tr\left[M^{L_1}_{1}\otimes\dots\otimes M^{L_n}_{n} \cdot W^{L_1\dots L_n}\right],
\label{born}
\end{multline}
where $W^{L_1\dots L_n}\geq 0$ is called the \emph{process matrix}, Fig.~\ref{processmat}, and represents the information about the outside world available in the local laboratories\footnote{To be more precise, the conditional probability is defined as Eq.~\eqref{born} when all the maps belong to the corresponding instruments, $M_j\in{\cal J}_j$ for $j=1,\dots,n$, and it is 0 otherwise.}.
\begin{figure}%
\includegraphics[width=0.6\columnwidth]{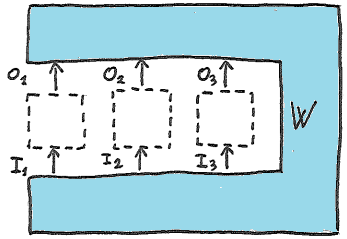}%
\caption{\textbf{Process matrix.} Given a set of local laboratories $I_1\otimes O_1, I_2\otimes O_2,\dots$ (represented as dashed boxes), all information about the ‘outside world’ is encoded in a joint probability distribution for the local events.
The probability can be calculated from the \emph{process matrix} $W^{I_1O_1I_2O_2\dots}$, defined on the tensor product of all input and output spaces.}%
\label{processmat}%
\end{figure}
In particular, the process matrix encodes all the information about the causal relations between laboratories.

 As in the classical case, causal relations are defined by signalling. By definition, a laboratory $L_h$ can signal to a laboratory $L_k$ if there exists a set of instruments $\left\{{\cal J}^{L_j}_{j}\right\}_{j=1}^n$ such that
\begin{multline}
P(M_{k}^{L_k}|{\cal J}_{1}^{L_1},\dots,{\cal J}_{h}^{L_h},\dots,{\cal J}_{n}^{L_n})\\
 \not=P(M_{k}^{L_k}|{\cal J}_{1}^{L_1},\dots,\widetilde{{\cal J}}_h^{L_h},\dots,{\cal J}_{n}^{L_n})
\label{signalling}
\end{multline}
for some instrument $\widetilde{{\cal J}_h}^{L_h}$, where the marginal probability is defined as
\begin{multline}
P(M_{k}^{L_k}|{\cal J}_{1}^{L_1},\dots,{\cal J}_{n}^{L_n})\\
:= \hspace{-4pt} \sum_{\substack{M_{r}\in {\cal J}_{r}\\ r\not=k}} \hspace{-4pt} \tr\left[M^{L_1}_{1}\otimes\dots\otimes M^{L_n}_{n}\cdot W^{L_1\dots L_n}\right].
\label{marginal}
\end{multline}
 Signalling between sets of laboratories is similarly defined. By definition, a local laboratory $A$ represents a \emph{cause} for a distinct laboratory $B$ if $A$ can signal to $B$. 

In classical causal models there is an important distinction between direct and indirect causes. Intuitively, a direct cause always influences its effect, no matter what else is changed in the model. This is formalised as follows:
\begin{definition}[direct cause] \label{direct}
Given a set of local laboratories \lab and a process matrix $W^{L_1\dots L_n}$, a laboratory $L_h$ represents a \textbf{direct cause} for a laboratory $L_k\not=L_h$ if, for \emph{any possible} set of instruments $\left\{{\cal J}^{L_j}_{j}\right\}_{j\not=k}$, there exist instruments ${\cal J}^{L_k}_{k}$ and $\widetilde{{\cal J}_h}^{L_h}$ such that
\begin{multline}
P(M_{k}^{L_k}|{\cal J}_{1}^{L_1},\dots,{\cal J}^{L_k}_{k},\dots,{\cal J}_{h}^{L_h},\dots,{\cal J}_{n}^{L_n})\\
 \not=P(M_{k}^{L_k}|{\cal J}_{1}^{L_1},\dots,{\cal J}^{L_k}_{k},\dots,\widetilde{{\cal J}_h}^{L_h},\dots,{\cal J}_{n}^{L_n}).
\end{multline}
\end{definition}

\subsection{Examples}
Before defining quantum causal models in full generality, it is useful to go through some explicit examples.

\subsubsection{Single laboratory}

Consider first the case of a single laboratory $A$, with only non-trivial input space $A_I$. In this case, process matrices reduce to $W^{A_I}\equiv\rho^{A_I}\geq 0$, CP maps reduce to POVM elements $E^{A_I}\geq 0$, and the generalised Born Rule \eqref{born} reduces to the ordinary Born rule: $P(E^{A_I})=\tr E \rho$. This describes the situation where laboratory $A$ receives a quantum state $\rho$ from the outside environment and performs a measurement on it. The environment, responsible for preparing state $\rho$, describes some part of the world that is not under experimental control. Of course it is always possible to consider a different scenario in which the state preparation is controlled. This is formalised by introducing a second laboratory $B$, with only output space $B_O$, in which the state is prepared. Now $B$ has the possibility to choose from different instruments, i.e.\ prepare different states $\rho_i$. Recall that, as local events, such state preparations are represented as $\left(\rho_i^{A_I}\right)^T$. The process matrix now describes how the two laboratories are connected: $\widetilde{W}^{B_OA_I}= \Proj{\id}^{B_O A_I}$ if $B$ is connected to $A$ via the identity channel, where notation \eqref{doubleket} is used,
\begin{equation}
\Proj{\id}^{B_OA_I}= \sum_{l\,k} \ketbra{l}{k}^{B_O}\otimes \ketbra{l}{k}^{A_I}.
\label{identitychannel}
\end{equation}
The generalised Born rule \eqref{born} now reads
\begin{multline}
P\left(E^{A_I} \middle|(\rho_i^{B_O})^T\right) \\ = \tr\left[ E^{A_I}\otimes \left(\rho_i^{B_O}\right)^T\cdot\Proj{\id}^{B_OA_I}\right] = \tr E \rho_i.
\label{examplestate}
\end{multline}
This reduces to the previous single-laboratory case if $B$ prepares $\rho_i\equiv \rho$. As the probability for any $E\not =\id$ depends non-trivially on $\rho$, $B$ represents a cause for $A$ in this example.

\subsubsection{Common cause}
The next example consists of two laboratories with non-trivial input and output spaces, $A=A_I\otimes A_O$ and $B=B_I\otimes B_O$.  A process matrix of the form $W^{AB}=\rho^{A_IB_I}\otimes \id^{A_OB_O}$ describes the situation where the two laboratories have no causal influence over each other (Fig.~\ref{common}). 
\begin{figure}%
\begin{tabular}{lll}
(a) & \quad & (b) \\
\includegraphics[width=0.4\columnwidth]{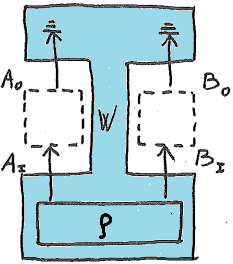} & \quad &
\includegraphics[width=0.4\columnwidth]{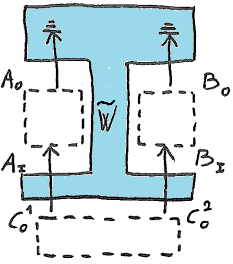}%
\end{tabular}
\caption{\textbf{Common cause.} (a) A process matrix of the form $W^{AB}=\rho^{A_IB_I}\otimes\id^{A_OB_O}$ generates non-signalling correlations between $A$ and $B$ (the `earth ground' symbol represents the identity matrix). (b) The state $\rho$ is interpreted as a common cause by introducing a third laboratory $C$. By preparing different states, $C$ can modify the correlations observed by $A$ and $B$. A new process matrix $\widetilde{W}^{ABC}$ represents the connections between the three laboratories.}%
\label{common}%
\end{figure}
The generalised Born rule reduces to
\begin{equation}
P(M^A,N^B)=\tr\left(E^{A_I} \otimes F^{B_I} \cdot \rho^{A_IB_I}\right),
\label{nosignalling}
\end{equation}
where $E^{A_I}:=\tr_{A_O} M^{A_IA_O}$ and $F^{B_I}:=\tr_{B_O} N^{B_IB_O}$.
Thus, the the above process matrix describes a bipartite state shared by the two laboratories. 

As before, one can allow for control over the state preparation by introducing a third laboratory $C$ with only non-trivial output space $C_O$. As $C$ can prepare bipartite states, its output space must decompose as a tensor product  $C_O=C_O^1\otimes C_O^2$. The subsystems $C_O^1$ and $C_O^2$ are isomorphic to $A_I$ and $B_I$ respectively. The process matrix that links $C_O^1$ to $A_I$ and $C_O^2$ to $B_I$, through identity channels, is given by $\widetilde{W}^{ABC}=\Proj{\id}^{C_O^1A_I}\otimes \Proj{\id}^{C_O^2A_I}\otimes\id^{A_OB_O}$. It is easy to see that, when the instrument ${\cal J}^C\equiv \left(\rho^{C_O^1C_O^2}\right)^T$ is chosen, the generalised Born rule reduces to expression \eqref{nosignalling}.

As a particular case, state $\rho$ can be entangled and $A$, $B$, can perform measurements that violate Bell inequalities. The \emph{common cause} for such correlations is simply associated with laboratory $C$, i.e.\ with the possibility of preparing different states. 

\subsubsection{Direct and indirect cause}
Still in the scenario with two laboratories $A=A_I\otimes A_O$,  $B=B_I\otimes B_O$, consider now the process matrix $W^{AB}= \rho^{A_I}\otimes \Proj{U}^{A_OB_I}\otimes \id^{B_O}$, where $U$ is a unitary matrix. In this case the generalised Born rule \eqref{born} reduces to
\begin{equation}
\tr\left( M^A\otimes N^B\cdot W^{AB}\right) = \tr\left[ F \; U {\cal M}(\rho)U^{\dag}\right],
\label{directAB}
\end{equation}
where $F$ is defined by $F^{B_I}:=\tr_{B_O} N^{B_IB_O}$ and ${\cal M}$ is related to $M$ through isomorphism \eqref{inverseCJ}. Eq.~\eqref{directAB} describes a state $\rho$ which is first measured in laboratory $A$ and, after evolving via unitary $U$, is measured in laboratory $B$.
For arbitrary $U$ and $\rho$, $A$ can signal to $B$ through an appropriate choice of instruments (for example preparing different pure states). Thus $A$ represents a cause for $B$.

As in the previous examples, the connecting mechanism can be included as an event in the model by introducing a third laboratory $C=C_I\otimes C_O$, with the new process matrix $\widetilde{W}^{ABC}= \rho^{A_I}\otimes \Proj{\id}^{A_OC_I}\otimes \Proj{\id}^{C_OB_I} \otimes \id^{B_O}$. The unitary evolution $U$ is now represented by the single-element instrument $\Proj{U^*}^{C_IC_O}$ (see Eq.~\eqref{localunitary}). It is now possible to consider interventions in the additional laboratory $C$. For example, $C$ can implement an instrument that breaks the flow of information from $A$ to $B$, e.g.\ the maximally noisy channel $\id^C/d_{C_O}$. In this case, no choice of instrument at $A$ can affect the probability for any POVM element in $B$. Thus, $A$ is a direct cause of $B$ given the process $W^{AB}$ but it is an indirect cause given the process $\widetilde{W}^{ABC}$. As in the classical case, whether a causal relation is direct or indirect  is relative to the set of variables included in the model.

\subsection{Markov  quantum causal models}\label{MQCM}
As for classical causal models, it is useful to depict causal relations graphically. Local laboratories are represented as nodes and the causal mechanisms connecting them as arrows.

Multiple arrows departing a single node represent different physical systems. Thus the output space of a laboratory factorises as a tensor product, with a tensor factor associated to each outgoing arrow, see Fig.~\ref{outputfactor}.
\begin{figure}%
\includegraphics[width=0.55\columnwidth]{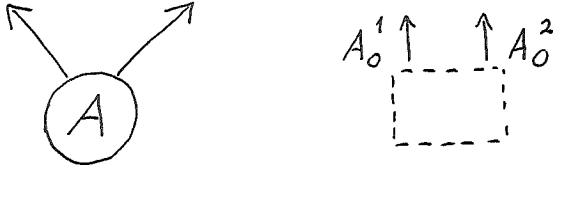}%
\caption{\textbf{Outgoing edges.} To each edge is associated a \emph{source space}. The output space of a local laboratory is the tensor product of the source spaces associated with all outgoing edges. In the picture, $A_O=A_O^{1}\otimes A_O^{2}$.}%
\label{outputfactor}%
\end{figure} 
Formally, a \emph{source space} $\opa_e$ is associated to the origin of each directed edge $e$, such that each output space factorises as $A_O=\bigotimes_{e\in \out_A}\opa_e$, where $\out_A$ is the set of edges departing from $A$.

As with classical models, arrows should represent a generic possibility of causal influence (and not just the undisturbed transfer of a system, as in the representation of quantum circuits). Thus, a graph with arrows that connect $A$ to $C$ and $B$ to $C$ represents quantum systems exiting from $A$ and $B$ and, possibly after interaction, entering $C$. In other words, the graph represents a generic quantum channel $T^{(A_O B_O)C_I}$ from the output of laboratories $A$ and $B$ to the input of laboratory $C$, Fig.~\ref{collider}
(or, more generally, from a subsystem of $A_O\otimes B_O$ to $C_I$, if $A$ and $B$ have outgoing edges that do not end on $C$).

\begin{figure}%
\includegraphics[width=0.33\columnwidth]{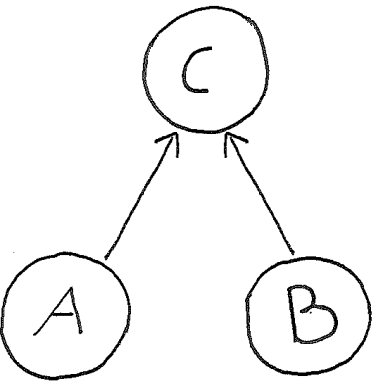}  \qquad \qquad
\includegraphics[width=0.23\columnwidth]{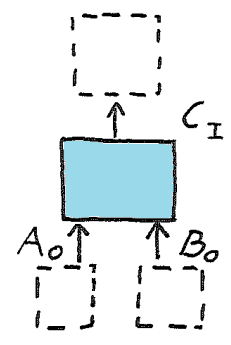}%
\caption{\textbf{Incoming edges.}
The \emph{parent space} $\ps_C$ of a local laboratory $C$ is the tensor product of the source spaces associated with all the incoming edges. (In the picture, $\ps_C = A_O\otimes B_O$.) The incoming edges represent a quantum channel from the parent space to the input space. }%
\label{collider}%
\end{figure}

When there are no outgoing edges from a laboratory, the output space can be understood as connecting to the trivial space (of dimension one). As the only CPTP map from a space $D_O$ to the trivial space is $T^{D_O}=\id^{D_O}$, a process matrix must be equal to the identity on the output space of all such laboratories.

A quantum causal model can be defined as a set of unitaries that connect (a subsystem of) the output spaces of the parents of a laboratory, plus  some noisy environment, to the input space of the laboratory (possibly after discarding part of the system).

Recall for classical causal models, the causal Markov condition is guaranteed by the assumption that all environmental noise remains uncorrelated. Also known as the independent noise assumption, this feature of causal modelling is retained in the quantum case. Thus a quantum causal model is a list of quantum channels, the structure of which forms a DAG, Fig.~\ref{qdagfig}.
\begin{definition}[\qdag]\label{qdag}
Given a set of local laboratories \lab, a \emph{\textbf{Markov quantum  causal model (\qdag)}} is a pair \qmc, where
\begin{enumerate}
	\item $G = \left\langle {\cal L},{\cal E}\right\rangle$ is a DAG that has the local laboratories as vertices;
	\item to each edge $e\in{\cal E}$ is associated a space
	$\opa_e$ such that $O_j=\bigotimes_{e\in \out_j}\opa_e$, $j=1,\dots,n$, where 
	\begin{equation}
	\out_j:=\left\{e\in{\cal E}|e=(L_j,L_k)\right\}
	\label{outgoing}
	\end{equation}
	is the set of edges departing from the vertex $L_j$;
	\item $W$ is a process matrix of the form 
	\begin{equation}
W^{L_1\dots L_n} =\bigotimes_{j=1}^n T_j^{\ps_j I_j}\otimes \id^{O_D},
\label{quantummarkov}
\end{equation}
where $O_D:=\bigotimes_{k\in{\cal D}}O_k$ is the output space of the laboratories with no outgoing edges, ${\cal D}:=\left\{k|\out_k = \varnothing \right\}$; $\ps_j:=\bigotimes_{e\in \IN_j}\opa_e$ is the \emph{parent space} associated with laboratory $L_j$, with
\begin{equation}
\IN_j:=\left\{e\in{\cal E}|e=(L_k, L_j)\right\}
\label{incoming}
\end{equation}
the set of incoming edges to $L_j$; and 
\begin{equation}
T_j^{\ps_j I_j}\geq 0,\quad \tr_{I_j}T_j^{\ps_j I_j} = \id^{\ps_j},\quad j=1,\dots,n.
\label{modelparameters}
\end{equation}
\end{enumerate}
\end{definition}
\begin{figure}%
\begin{tabular}{ll}
(a) & \qquad (b) \\
\includegraphics[width=0.37\columnwidth]{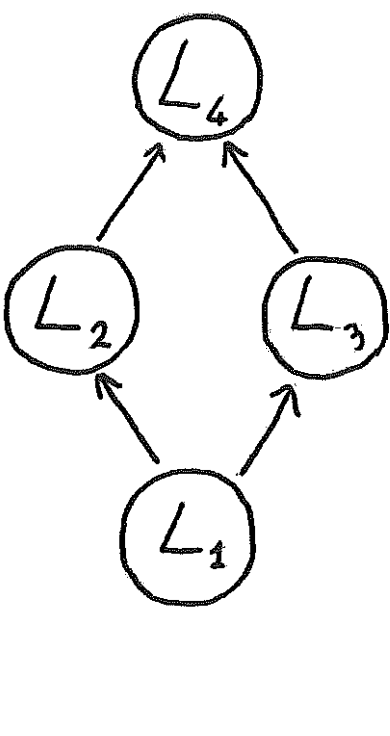} & \qquad
\includegraphics[width=0.5\columnwidth]{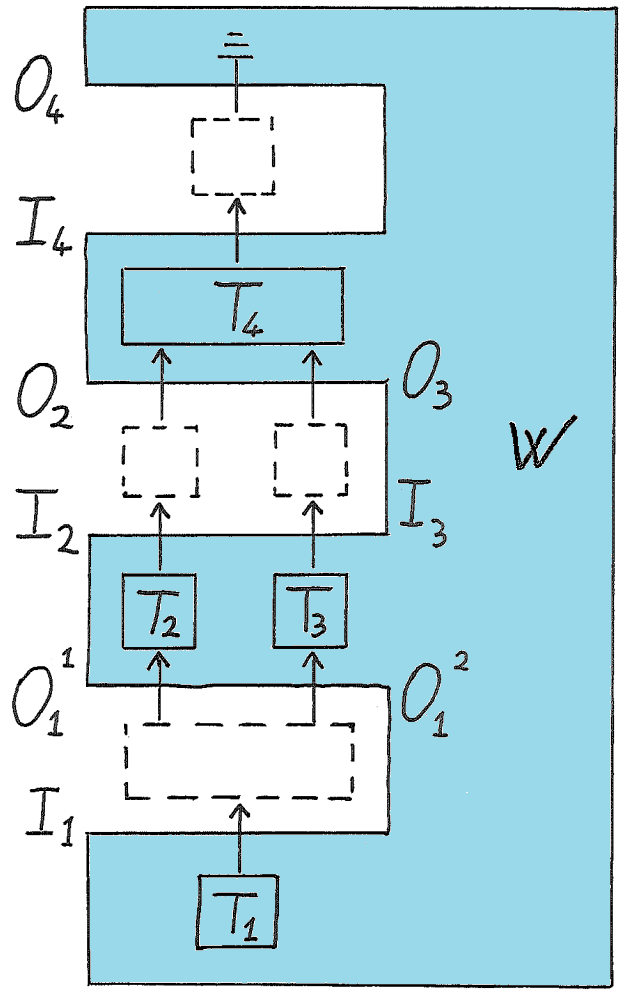} 
\end{tabular}
\caption{\textbf{Markov  quantum causal model.} (a) The causal structure of a Markov  quantum causal model is represented by a directed acyclic graph, where the nodes represent local laboratories. (b) The causal mechanisms consist of a quantum channel for each laboratory, connecting its parent space to its input space. The model is represented compactly by a process matrix that satisfies the \emph{quantum causal Markov condition}, Eq.~\eqref{quantummarkov}. For the example in the picture, $W=T_1^{I_1}\otimes T_2^{O_1^{1}I_2}\otimes T_3^{O_1^{2}I_3}\otimes T_4^{(O_2 O_3) I_4}\otimes\id^{O_4}$.}
\label{qdagfig}%
\end{figure}%
The matrices $T_j$ in Eq.~\eqref{modelparameters} define the quantum channels that connect local laboratories to each other and replace the conditional probabilities that define causal mechanisms in a classical model. 
A process matrix with the structure \eqref{quantummarkov} is said to factorise over the DAG $G$. As this factorisation is a direct analog of the causal Markov condition \eqref{causalmarkov}, it is natural to name \eqref{quantummarkov} the \emph{quantum causal Markov condition}. 

It is useful to relate Markov causal models to Markovian time evolution, a concept  familiar to most physicists \cite{breuer2002theory}. Consider a DAG $L_1\rightarrow L_2\rightarrow\dots \rightarrow L_n$, with $d_{I_j}=d_{O_j}$ (the generalisation to different dimensions is straightforward).
The quantum causal Markov condition \eqref{quantummarkov} reduces to
\begin{equation}
W=\rho_1^{I_1}\otimes T_2^{O_1I_2}\otimes\dots\otimes T_n^{O_{n-1}I_n}\otimes \id^{O_n}.
\label{markovchain}
\end{equation}
If all laboratories except $L_k$ perform the identity channel, $\Proj{\id}^{L_j}$ for $j\not =k$, and $L_k$ performs a POVM measurement, ${\cal J}^{L_k}=\left\{E_{x_k}^{I_k}\otimes \rho^{O_k}\right\}_{x_k}$, the generalised Born rule \eqref{born} reduces to
\begin{gather}\label{kmarkov}
P(E^{I_k}_{x_k})= \tr\left[E_{x_k} {\cal T}^{(k)}\left(\rho_1\right)\right],\\\label{discrete}
{\cal T}^{(k)}:= {\cal T}_k\circ\dots\circ{\cal T}_2,
\end{gather}
where $\circ$ denotes composition of maps.
Eq.~\eqref{discrete} defines a discrete-time Markovian evolution, while Eq.~\eqref{kmarkov} shows that the $k$-th laboratory receives the result of such an evolution applied to the initial state $\rho_1$, provided that no intermediate operation is performed. Channels of the form \eqref{discrete} are also called \emph{divisible }\cite{ledivisible2015}.

\subsection{Latent laboratories and non-Markovian models}\label{latent}
A quantum process can involve unobserved events. Such events may correspond to naturally occurring evolution in the environment surrounding the observed events, or to non-selective measurements, with unknown outcomes, performed by adversary agents. A local laboratory in which the events are not observed will be called \emph{latent}, in direct analogy to latent variables in classical models. 

\begin{definition}
A \emph{\textbf{Markovian causal explanation}} for a process matrix $W^{L_1\dots L_n}$ is an \qdag $\left\langle G, \widetilde{W}\right\rangle$, where $G$ is a DAG with vertices containing ${L_1,\dots, L_n}$, and possibly latent laboratories $\widetilde{\cal L} =\left\{\tilde{L}_1,\dots, \tilde{L}_m\right\}$, such that
\begin{multline}
W 
=\tr_{\widetilde{\cal L}}\left[C^{\tilde{L}_1}_1\otimes\dots\otimes C^{\tilde{L}_m}_m \cdot \widetilde{W}^{L_1\dots L_n \tilde{L}_1\dots \tilde{L}_m} \right]
\label{reduced}
\end{multline}
for some CPTP maps $C^{\tilde{L}_1}_1,\dots, C^{\tilde{L}_m}_m$, where $\tr_{\widetilde{\cal L}}$ denotes the partial trace over all the laboratories in $\widetilde{\cal L}$.
\end{definition}

In Eq.~\eqref{reduced}, W is called \emph{reduced process matrix} \cite{araujo15, oreshkov15} and provides a full description of the physical situation for the observed laboratories, once the CPTP maps in the latent laboratories are fixed.
The observed laboratories can now be linked by non-Markovian evolution, possibly with initial system-environment correlations: all the information regarding initial state and evolution of the environment is encoded in the latent laboratories. A formalism closely related to the one described here has in fact been developed for the study of non-Markovian dynamics \cite{modioperational2012, pollockcomplete2015}.

A process matrix is called \emph{causally ordered} if it is possible to define a relation of partial order among the laboratories, and signalling from a laboratory $L_1$ to another $L_2$ is possible only if $L_1$ precedes $L_2$ according to the assigned partial order. An important result from the theory of quantum networks ~\cite{gutoski06, chiribella09b} is that every causally ordered process matrix (formally equivalent to a quantum strategy or a quantum comb) can be realised by combining a sequence of quantum channels with memory. This implies that every causally ordered process matrix has a Markovian causal explanation. In other words, causally ordered, non-Markovian models can always be reduced to Markovian ones through the introduction of latent laboratories.
  
An MQCM can be further extended by including connecting mechanisms as locally observed events. There is a general procedure to re-write a connecting mechanism as a local event. Consider a local laboratory $A=A_I\otimes A_O$, with parent space $\ps_A=\bigotimes_{e\in\IN_A}\opa_e$, and a process matrix that factorises as $W=T^{\ps_A \,A_I}\otimes W'$, where $T^{\ps_A\, A_I}$ is the mechanism to be included as an event. The extended process matrix is $\widetilde{W}=\Proj{\id}^{\ps_A B_I}\otimes \Proj{\id}^{B_O A_I}\otimes W'$, with the additional local laboratory $B=B_I\otimes B_O$, $B_I\cong \ps_A$ and $B_O\cong A_I$. The original process matrix is recovered as the reduced process matrix
\begin{equation*}
W=\tr_B\left[ \left(T^{B_I B_O}\right)^T \cdot \widetilde{W}\right].
\end{equation*}

With sufficient resources, it is in principle possible to take control over all the quantum channels that define an MQCM. In this scenario, all nontrivial evolution takes place in local laboratories and can be in principle controlled, while the process matrix only describes connections between laboratories. In the resulting causal model, each edge $e$ is associated both with a \emph{source} space $\opa_e$ and a \emph{target} space $\ech_e$, such that $\opa_e\cong\ech_e$, $O_j=\bigotimes_{e\in\out_j}\opa_e$ and $I_j=\bigotimes_{e\in\IN_j}\ech_e$ for every $j$.
The process matrix is
\begin{equation}
W^{L_1\dots L_n}=\bigotimes_{e\in{\cal E}} \Proj{\id}^{\opa_{e}\ech_{e}}.
\label{connectionsonly}
\end{equation}
This process represents the set of `wires' obtained from a quantum circuit after removing all the gates.

Thus it is possible to give a causal interpretation to quantum circuits and related structures \cite{hardy2009foliable, PhysRevA.81.062348, coecke2010quantum} by allowing the possibility of replacing each gate with an arbitrary operation.

\section{Causal discovery}\label{discovery}

The success of classical causal models is largely due to the existence of efficient tools for discovering causal structures. Apart from practical utility, causal discovery has foundational significance, as it gives empirical meaning to causal structures. This section shows that  it is always possible to uniquely determine the causal structure of a Markov  quantum causal model from experimental observations. The assumptions required are  similar to those imposed on classical causal models. The question of whether this can be done efficiently will not be addressed here.

\subsection{Process-matrix tomography}

A process matrix can always be reconstructed using \emph{informationally complete} instruments. This is directly analogous to informationally complete measurements \cite{newtonmeasurability1968, bandempirical1970,
prugovecki1977}. For this purpose, note that a process matrix $W^{L_1\dots L_n}$ can be formally regarded as a density matrix with different normalisation:
\begin{equation}
\rho^{L_1\dots L_n}:= \frac{1}{d_O} W^{L_1\dots L_n},
\label{density}
\end{equation}
where $d_O=\prod_{j=1}^nd_{O_j}$ is the product of the dimensions of all output spaces. The matrix \eqref{density} is normalised as a quantum state, $\tr \rho=1$, because process matrices are normalised as $\tr W =d_O$ \cite{oreshkov12, araujo15}. It is always possible to find an informationally complete POVM with product elements $E_{x_1}^{L_1}\otimes\dots \otimes  E_{x_n}^{L_n}$, $\sum_{x_j} E_{x_j}^{L_j} = \id^{L_j}$, 
such that $\rho$ is a function of the probability distribution
\begin{multline}
P(x_1,\dots,x_n)
=\tr\left[E_{x_1}^{L_1}\otimes \dots \otimes E_{x_n}^{L_n} \cdot \rho^{L_1\dots L_n}\right].
\label{rhotomography}
\end{multline}
This means that $W$ is also a function of the same distribution.
Define now $M_{x_j} := E_{x_j}/d_{O_j}$ for $j=1,\dots,n$. As the sets ${\cal J}_j := \left\{M_{x_j}\right\}_{x_j}$ are properly normalised instruments, the statistics \eqref{rhotomography} can be obtained by implementing such instruments on the original process matrix:
\begin{multline}
P(x_1,\dots,x_n)\\
=\tr\left[M_{x_1}^{L_1}\otimes \dots \otimes M_{x_n}^{L_n} \cdot W^{L_1\dots L_n}\right].
\label{Wtomography}
\end{multline}
Thus, it is possible to reconstruct $W$ by measuring the instruments ${\cal J}_1,\dots,{\cal J}_n$. An example of this procedure is the \emph{causal tomography} considered in Ref.~\cite{ried2015quantum}.

Alternatively, one can use the fact that a process matrix defines a CPTP map from all outputs, $\bigotimes_{j=1}^nO_j$, to all inputs, $\bigotimes_{j=1}^nI_j$, and such a CPTP map can be reconstructed using conventional \emph{process tomography} \cite{chuang00}. Informationally complete \emph{testers} \cite{chiribella09b} also allow reconstructing a process matrix, although in general they require \emph{a priori} knowledge of the causal order between laboratories. In the context of causal discovery, such a knowledge is not available and thus arbitrary testers cannot be used.

Given a process with a Markovian causal explanation, it is in principle possible to gain control over all connecting mechanisms and, through tomography, reconstruct the process matrix of `wires' \eqref{connectionsonly}. This matrix contains full information about the causal structure. However, this may well be unfeasible in practice; the methods discussed below can provide information about causal structure in the absence of full control.

\subsection{Discovery using Hilbert-Schmidt basis}
Consider a space of linear operators over a Hilbert space with finite dimension $d_X$, $X\equiv \lin({\cal H}^{X})$. A Hilbert-Schmidt (HS) basis is a set of self-adjoint matrices $\left\{\sigma_{\mu}^{X_{\mu}}\right\}_{\mu=0}^{d^2-1}$ with $\sigma_0 =\id$, $\tr \sigma_{\mu}\sigma_{\nu} =d_X\delta_{\mu\nu}$. This implies $\tr \sigma_j=0$ for $j\geq 1$.\footnote{Pauli matrices are an example for $d=2$.} Local operations and process matrices can be expanded in HS bases. For example, a generic CP map for a local laboratory $A=A_I\otimes A_O$ decomposes as
\begin{equation}
M^{A_IA_O} = \sum_{\mu,\, \nu=0}^{d^2-1} v_{\mu \nu} \, \sigma^{A_I}_{\mu}\otimes \sigma^{A_O}_{\nu},\quad v_{\mu \nu} \in \mathbb{R}.
\label{decomposition}
\end{equation}
Following Ref.~\cite{oreshkov12}, the following terminology applies.
\begin{itemize}
	\item The term proportional to identity, $v_{00}\id$, is called \emph{of type 1}.
	\item Terms equal to the identity on all subsystems except subsystem $X$ are called \emph{of type $X$}. For example, the terms $v_{0 j} \id^{A_I}\otimes \sigma^{A_O}_{j}$, for $j>0$, are of type $A_O$.
	\item Terms equal to the identity on all subsystems except $X_1\otimes\dots \otimes X_k$ are called \emph{of type $X_1\cdots X_k$}. For example, the terms $v_{j l} \sigma^{A_I}_{j}\otimes \sigma^{A_O}_{l}$, for $j,l>0$, are of type $A_IA_O$.
	\item The sum of terms of different types $X_1,\dots,X_r$ is called \emph{of type $(\sum_{j=1}^r X_j)$}. For example, a generic CP map \eqref{decomposition} has terms of type $(1+A_I+A_O+A_IA_O)$.	
\end{itemize}
Some simple algebra can be applied to the above notation, such as $1\cdot X=X$ and $(\sum_jX_j)\cdot (\sum_lY_l)=(\sum_{jl}X_jY_l)$ for $X_j\not=Y_l$.

The HS decomposition can be used to characterise correlations and causal relations. For example, given a bipartite process matrix $W^{AB}$, terms of type $A_OB_I$ allow signalling from $A$ to $B$. Terms of type $A_IB_I$, on the other hand, are responsible for common cause correlations when POVM elements $E^{A_I}\otimes F^{B_I}$ are measured.

For a quantum channel $T^{A_O B_I}$, the trace-preserving condition, Eq.~\eqref{CPTP}, implies that terms of type $A_O$ vanish. 
Consequently, a process matrix that factorises over a DAG $G$, Eq.~\eqref{quantummarkov}, has to satisfy a set of constraints. It is thus possible to test the compatibility of a process matrix with a DAG $G$ by looking for the terms excluded by $G$: if an excluded term is found, the process matrix is not compatible with $G$. Note that, although the constraints are formulated with reference to a HS basis, they are in fact basis independent, as any local change of HS basis does not change the type of terms contained in a process matrix. Such conditions can also be reformulated as basis-independent linear constraints, following the methods introduced in \cite{araujo15}, but this technique will not be discussed here.

\subsection{Faithfulness}\label{faithfulness}
A general process matrix might factorise over more than one DAG. Consider the example, Fig.~\ref{finetuned}(a),
\begin{equation}
W^{A_IA_OB_IB_O}=\rho_A^{A_I}\otimes \rho_B^{B_I}\otimes \id^{A_OB_O},\quad
\label{simple}
\end{equation}
with $\tr\rho_A=\tr\rho_B=1$.
This can be factorised in the following ways (with tensor product symbols understood):
\begin{align}
W =& \rho_A^{A_I} T^{A_OB_I} \id^{B_O},\quad T^{A_OB_I} = \id^{A_O} \rho_B^{B_I},\\
W =& \rho_B^{B_I} T^{B_OA_I}\id^{A_O}, \quad T^{B_OA_I}= \id^{B_O}\rho_A^{A_I},\\
W =& T_A^{A_I}T_B^{B_I}\id^{A_OB_O}, \quad T_A^{A_I}=\rho_A^{A_I},\; T_B^{B_I}=\rho_B^{B_I},
\end{align}
which correspond to the DAGs in Figs.~\ref{finetuned}(b), \ref{finetuned}(c), and ~\ref{finetuned}(d), respectively.%
\begin{figure}%
\begin{tabular}{m{12em}l}
(a) & (b) \vspace{0.5em}\\
\includegraphics[width=0.35\columnwidth]{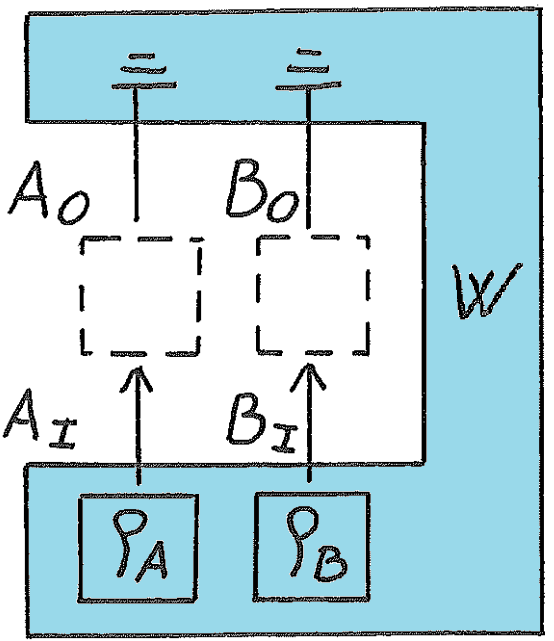} &
\begin{tabular}{l}
\includegraphics[width=0.35\columnwidth]{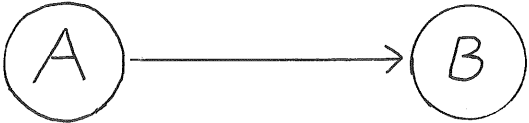} \vspace{1em} \\
(c) \vspace{0.5em} \\
\includegraphics[width=0.35\columnwidth]{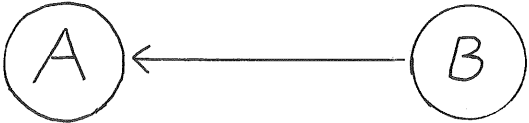} \vspace{1em} \\ 
(d) \vspace{0.5em} \\
\includegraphics[width=0.35\columnwidth]{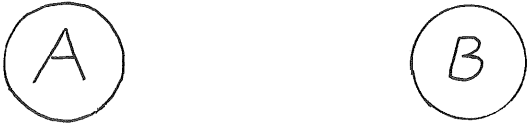}
\end{tabular}
\end{tabular}
\caption{\textbf{Faithfulness.} The process (a), Eq.~\eqref{simple}, satisfies the quantum causal Markov condition for all three depicted causal structures. However, only (d) corresponds to a \emph{faithful} causal model.}%
\label{finetuned}%
\end{figure}
However, it is clear that only the third DAG provides an appropriate representation of the causal relations in $W$, as it describes two uncorrelated laboratories.
The DAGs in Figs.~\ref{finetuned}(c), \ref{finetuned}(d), on the other hand, require a specially selected mechanism that is unable to carry causal influence. This leads to the following definition (analogous to the definition for classical causal models \cite{spirtes2000causation}):
\begin{definition}[Faithfulness]
An MQCM $\left\langle G,W \right\rangle$ is \emph{\textbf{faithful}} if $W$ contains non-vanishing HS terms of all types compatible with the factorisation \eqref{quantummarkov}. Such terms are said to be \emph{implied} by $G$.
\end{definition}
The terms implied by a DAG can be readily identified. To each laboratory $L_j$ is associated a quantum channel connecting the parent space $\ps_j$ to the input space $I_j$. Such a channel is given by a matrix that satisfies Eq.\ \eqref{modelparameters}, $\tr_{I_j}T_j^{\ps_j I_j} = \id^{\ps_j}$. This condition implies that $T_j^{\ps_j I_j}$ cannot contain terms of type $S_e$, for $e\in \IN_j$, or any product thereof. The remaining terms are of type $1+ \prod_{e\in \IN_j}(1+S_e) I_j$. Thus, a DAG $G=\left\langle {\cal L},{\cal E} \right\rangle$, \lab, implies all and only terms of type
\begin{equation}
\prod_{j=1}^n \left[1+ \prod_{e\in \IN_j}(1+S_e) I_j\right].
\label{faithful}
\end{equation}
In particular, every edge $e$ ending on a laboratory $L_j$ implies terms of type $S_eI_j$. Recall, for an edge $e$ connecting a laboratory $L_k$ to a laboratory $L_j$, the source space $S_e$ is a subsystem of the output space $O_k$. At the price of a slight abuse of terminology, we will refer to terms of type $S_eI_j$, with $e\in\out_k\cap\IN_j$, simply as terms of type $O_kI_j$. We can thus say that a term $O_kI_j$ is implied if and only if $L_k$ is a parent of $L_j$ in the DAG.

If $\left\langle G,W \right\rangle$ is a faithful MQCM, $W$ is said to be faithful to $G$. A non-faithful causal model is said to be \emph{fine tuned}. A process matrix $W$ is fine tuned if it only factorises for fine tuned models, while it is faithful if it is faithful to some DAG.

For faithful causal models, the relations between laboratories in a DAG directly correspond to causal relations. In Appendix \ref{faithfulproof}, the following theorem is proved.
\begin{theorem}
\label{faithfultheorem}
Given a faithful causal model $\left\langle G, W\right\rangle$ for a set of laboratories \lab, a laboratory $L_k$ is a \emph{direct cause} for a laboratory $L_h$ if and only if $L_k$ is a parent of $L_h$ in $G$.
\end{theorem}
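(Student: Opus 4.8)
The plan is to prove the two directions of the equivalence separately, in both cases exploiting the Hilbert--Schmidt (HS) type analysis established above. Recall that, by the discussion following Definition~\ref{qdag} and the faithfulness machinery, a faithful process matrix $W$ contains \emph{exactly} the HS terms of type \eqref{faithful}, and in particular a term of type $O_kI_j$ is present in $W$ if and only if $L_k$ is a parent of $L_j$ in $G$. The strategy is therefore to show that a term of type $O_kI_h$ in $W$ is both necessary and sufficient for $L_k$ to be a direct cause of $L_h$ in the sense of Definition~\ref{direct}.

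For the ``if'' direction, suppose $L_k$ is a parent of $L_h$, so $W$ contains a non-vanishing term of type $S_eI_h$ for some edge $e\in\out_k\cap\IN_h$. Fix an arbitrary choice of instruments $\{{\cal J}_j\}_{j\neq h}$ and consider plugging in the corresponding CP maps; tracing out all laboratories other than $L_h$ and $L_k$ produces an effective bipartite object relating $S_e\subset O_k$ to $I_h$. The key point is that the CPTP normalisation \eqref{modelparameters} of each $T_j$ guarantees that the type-$S_eI_h$ term survives this contraction (it cannot be killed by a trace on $O_k$ or on $I_h$ from the other mechanisms, since those would be terms of type $S_e$ or of a form excluded by the channel constraints), so the reduced object genuinely depends, in a type-$O_kI_h$ way, on the input chosen at $L_k$. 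One then exhibits, at $L_k$, two instruments $\widetilde{{\cal J}_k}$, ${\cal J}_k$ differing by the state prepared into $S_e$ (e.g.\ preparing two non-orthogonal pure states through that source subsystem), together with a suitable $E^{I_h}$ at $L_h$, that detect this dependence via the generalised Born rule \eqref{born}. This yields the signalling inequality required by Definition~\ref{direct}.

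For the ``only if'' direction, suppose $L_k$ is \emph{not} a parent of $L_h$. Then by faithfulness $W$ contains no term of type $O_kI_h$, and more strongly the factorised form \eqref{quantummarkov} shows that $I_h$ appears only inside the single mechanism $T_h^{\ps_h I_h}$, whose parent space $\ps_h$ does not contain any source space associated with an outgoing edge of $L_k$. Given any instruments at the other laboratories, one contracts the process matrix against all maps except $M_h^{L_h}$: because $L_h$'s marginal depends on $W$ only through $T_h^{\ps_h I_h}$ and the channels feeding $\ps_h$, and none of these involves $O_k$ except through intermediate CPTP maps that are being traced (and hence only transmit via their own inputs, recursively back through non-$L_k$ ancestors), the resulting marginal probability for any $E^{I_h}$ is independent of the instrument chosen at $L_k$. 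Hence no signalling from $L_k$ to $L_h$ is possible for any choice of instruments, so $L_k$ is not even a cause, let alone a direct cause.

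The main obstacle I expect is the ``if'' direction: one must argue carefully that the presence of a type-$O_kI_h$ term in $W$ is not ``washed out'' once all the other laboratories' instruments are fixed to arbitrary (possibly adversarial) CP maps---i.e.\ that a direct-cause term cannot be hidden by the choices at intermediate or sibling laboratories. This is precisely where the independent-noise/CPTP structure \eqref{modelparameters} of a Markov model does the work, and making the type-arithmetic bookkeeping rigorous (tracking which HS types can be generated or annihilated under contraction with an arbitrary CP map, using the simple algebra $1\cdot X=X$ and $(\sum_jX_j)(\sum_lY_l)=(\sum_{jl}X_jY_l)$) is the technical heart of the argument. The ``only if'' direction is comparatively routine, following directly from the locality of $I_h$ within the factorisation \eqref{quantummarkov}.
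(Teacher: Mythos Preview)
Your ``only if'' direction contains a genuine error. You claim that if $L_k$ is not a parent of $L_h$ then ``no signalling from $L_k$ to $L_h$ is possible for any choice of instruments, so $L_k$ is not even a cause''---but this is false whenever $L_k$ is an \emph{indirect} ancestor of $L_h$. For instance, if $L_k\to L_j\to L_h$ and $L_j$ performs the identity channel, then $L_k$ certainly signals to $L_h$. Your recursive argument (``back through non-$L_k$ ancestors'') breaks precisely here: the intermediate lab $L_j$ \emph{does} have $L_k$ as a parent, and tracing it out does not in general erase the dependence on $O_k$. What Definition~\ref{direct} actually requires you to do is exhibit \emph{some} choice of instruments at the other labs that blocks signalling. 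The paper does this by having every child $L_j$ of $L_k$ perform the maximally noisy channel $\id^{L_j}/d_{O_j}$: since by \eqref{faithful} every HS term in $W$ that is nontrivial on $O_k$ is also nontrivial on some $I_j$ with $L_j$ a child of $L_k$ (and $j\neq h$), tracing against the identity on $I_j$ annihilates all such terms, leaving a reduced process with no $O_k$-dependence at all. Far from being the routine half, this is where the specific construction lives.

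For the ``if'' direction you correctly identify the obstacle---can an adversarial choice of instruments at the other labs wash out the $O_kI_h$ term?---but you do not resolve it, and your appeal to the channel constraints \eqref{modelparameters} on the mechanisms $T_j$ is misplaced: those are already baked into $W$, and what matters is the structure of the \emph{local} CPTP maps $C_j^{L_j}$ being contracted. The paper's resolution is one line: every CPTP map contains a nonzero type-$1$ component, so the tensor product $\bigotimes_{j\neq k,h}C_j^{L_j}$ does too; contracting that component against the type-$O_kI_h$ term of $W$ leaves it intact in the reduced bipartite process $\overline{W}^{L_kL_h}$, which therefore supports signalling from $L_k$ to $L_h$.
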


The `only if' part of the theorem is also true for non-faithful models. This property can be phrased as \emph{children are able to screen off the causal influence of their parents to the rest of the world}. This means that there exist instruments for the child laboratories that break the causal connections from their common parents to any other laboratory.

\subsection{Discovery of faithful causal structures}\label{faithfuldiscovery}
The task of causal discovery is to find a causal structure, i.e.\ a DAG $G$, which can explain a set of empirical data. Faithful quantum causal models have two important properties in this respect:
\begin{inparaenum}[i)]
	\item \label{unambiguous} the causal structure of a faithful \qdag can be discovered unambiguously; 
	\item almost all causal models are faithful, in an appropriate measure-theoretic sense.
\end{inparaenum}
These points are formalised in the following theorems.
\begin{theorem}
If two MQCMs $\left\langle G, W\right\rangle$, $\left\langle G', W\right\rangle$ for the same process matrix $W$ are both faithful, then $G=G'$.
\end{theorem}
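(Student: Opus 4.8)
The plan is to show that the set of HS term-types implied by a DAG $G$, as given by the explicit formula \eqref{faithful}, uniquely determines $G$, and then to invoke faithfulness to turn "implied by $G$" into "present in $W$". Concretely, suppose $\left\langle G,W\right\rangle$ and $\left\langle G',W\right\rangle$ are both faithful. Since the model is faithful, $W$ contains non-vanishing HS terms of \emph{all and only} the types implied by its DAG; hence the set of types appearing in $W$ equals the set implied by $G$, and likewise equals the set implied by $G'$. Therefore $G$ and $G'$ imply exactly the same set of types, and it suffices to prove that the map $G\mapsto\{\text{implied types}\}$ is injective.

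For the injectivity step I would work edge by edge. By the discussion following \eqref{faithful}, a term of type $O_kI_j$ (meaning $S_eI_j$ for some $e\in\out_k\cap\IN_j$) is implied if and only if $L_k$ is a parent of $L_j$ in the DAG. So from the implied types one can read off, for every ordered pair $(k,j)$, whether $(L_k,L_j)\in{\cal E}$: the edge set ${\cal E}$ of $G$ is recovered as exactly the set of pairs $(L_k,L_j)$ such that $W$ contains a non-vanishing term of type $O_kI_j$. The same holds for $G'$. Since the vertex set is fixed (it is the given set of laboratories ${\cal L}$) and both edge sets coincide with the same collection of pairs determined by $W$, we conclude ${\cal E}={\cal E}'$ and hence $G=G'$.

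One subtlety to address carefully is that the laboratories carry the extra structure of a factorisation of each output space $O_j=\bigotimes_{e\in\out_j}S_e$ into source spaces, one per outgoing edge, and the formula \eqref{faithful} actually implies the finer terms of type $S_eI_j$ rather than the coarse $O_kI_j$. I would note that recovering the \emph{graph} $G=\langle{\cal L},{\cal E}\rangle$ only requires knowing, for each pair $(k,j)$, whether \emph{some} source subsystem of $O_k$ appears jointly with $I_j$; the identification of individual source spaces with individual edges is part of the model data (item 2 of Definition~\ref{qdag}) and is not needed for the claim $G=G'$, which is a statement about the DAG alone. It is worth being explicit that we do not need to disentangle which tensor factor of $O_k$ is responsible — the mere presence of a non-identity action on any subsystem of $O_k$ correlated with $I_j$ suffices to certify the edge.

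The main obstacle I anticipate is making rigorous the claim that the type $O_kI_j$ being present in $W$ really does certify the edge $(L_k,L_j)$ and nothing else — in particular, ruling out that such a term could be "manufactured" by a product of other implied types from different parts of the graph. Here I would appeal to the algebra of types noted after \eqref{decomposition} (e.g.\ $(\sum_j X_j)(\sum_l Y_l)=(\sum_{jl}X_jY_l)$ for distinct factors): a product of implied terms lives on the union of the involved subsystems, so a term supported exactly on $S_e\otimes I_j$ with $e\in\out_k$ can only arise from the factor $T_j^{\ps_j I_j}$ in \eqref{quantummarkov}, whose presence of an $S_e$-nontrivial component is precisely the statement $e\in\IN_j$, i.e.\ $L_k$ is a parent of $L_j$. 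Once this "no spurious terms" bookkeeping is pinned down, the injectivity of $G\mapsto\{\text{implied types}\}$ — and hence the theorem — follows immediately.
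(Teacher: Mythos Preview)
Your proposal is correct and follows essentially the same route as the paper: both arguments reduce the claim to the observation that, by faithfulness and Eq.~\eqref{faithful}, $W$ contains a term of type $O_kI_j$ if and only if $L_k$ is a parent of $L_j$, which determines the edge set from $W$ alone. Your write-up is more explicit about the bookkeeping (source spaces versus coarse $O_k$, and why a $S_eI_j$ term cannot be ``manufactured'' from other factors in the product \eqref{faithful}), but the paper's proof simply asserts this consequence of \eqref{faithful} in one line.
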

\begin{proof}
As $G$ and $G'$ have the same set of vertices, the only thing to prove is that they have the same set of edges. This is equivalent to proving that each laboratory has the same set of parents for $G$ and $G'$. The latter is a direct consequence of the fact that $W$ contains a term of type $O_kI_j$ if and only if $L_k$ is a parent of $L_j$, Eq.~\eqref{faithful}. 
\end{proof}
Thus, the HS terms contained in a faithful process matrix uniquely identify a causal structure. The HS terms can be fully determined if in each laboratory an informationally complete instrument is available. If only a subset of the HS terms is measured, the observed data can be compatible with more than one faithful causal model. This is analogous to the classical case: if arbitrary interventions are not available, it is not possible to fully characterise causal structure.

The next theorem shows that, given a sufficiently regular prior knowledge on the causal structure of a model, there is vanishing probability that the model is fine-tuned (see Ref.~\cite{Meek95} for the classical analogue).
\begin{theorem}
Given a set of laboratories ${\cal L}=\left\{L_j=I_j\otimes O_j\right\}_{j=1}^n$, a DAG $G=\left\langle {\cal L}, {\cal E}\right\rangle$ and a non-singular measure on the space of process matrices that factorise over $G$, the set of fine-tuned process matrices have measure zero.
\end{theorem}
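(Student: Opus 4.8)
The plan is to show that fine-tuning is a non-generic algebraic condition. Fix the DAG $G$ and the associated source spaces $\{S_e\}_{e\in{\cal E}}$. By Definition~\ref{qdag}, the space of process matrices that factorise over $G$ is parametrised by the tuple $(T_1,\dots,T_n)$, with each $T_j^{\ps_j I_j}$ ranging over the set $\mathcal{C}_j$ of Choi matrices of channels from $\ps_j$ to $I_j$ (the affine slice $\tr_{I_j}T_j = \id^{\ps_j}$ intersected with the positive cone). The map $\Phi:(T_1,\dots,T_n)\mapsto \bigotimes_{j=1}^n T_j^{\ps_j I_j}\otimes\id^{O_D}$ is a polynomial (indeed multilinear) map onto the space of $G$-factorising process matrices, and a ``non-singular measure'' on the target is, by assumption, one that is absolutely continuous with respect to (the pushforward of) a full-dimensional Lebesgue-type measure on the parameter domain $\prod_j \mathcal{C}_j$; equivalently, it assigns zero mass to any proper algebraic subvariety. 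So it suffices to show that the fine-tuned locus is contained in a proper algebraic subvariety of $\prod_j \mathcal{C}_j$.

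Next I would identify, for the given $G$, the finite list of HS term-types implied by $G$, namely those appearing in the product \eqref{faithful}. For each such implied type $\tau$, the coefficient $c_\tau(T_1,\dots,T_n)$ of the corresponding Hilbert--Schmidt basis element in $\Phi(T_1,\dots,T_n)$ is a polynomial in the entries of the $T_j$'s. Because $\Phi$ is a tensor product, $c_\tau$ factorises: if $\tau$ restricts to type $\tau_j$ on the tensor factor $\ps_j\otimes I_j$ (with $\tau_j$ one of the types allowed by the trace-preserving constraint on $T_j$), then $c_\tau = \prod_j c_{\tau_j}(T_j)$, where $c_{\tau_j}(T_j)$ is the coefficient of that local type in the HS expansion of $T_j$. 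A model $(T_1,\dots,T_n)$ is fine-tuned precisely when at least one implied $c_\tau$ vanishes, i.e.\ when $(T_1,\dots,T_n)$ lies in the zero set of the single polynomial $F := \prod_{\tau\ \mathrm{implied}} c_\tau$. Thus the fine-tuned locus is an algebraic subvariety, and it remains only to check that $F$ does not vanish identically on $\prod_j \mathcal{C}_j$ — equivalently, by the product structure, that for each $j$ and each locally-implied type $\tau_j$ there exists a channel $T_j\in\mathcal{C}_j$ whose HS expansion has a nonzero $\tau_j$ coefficient. This is an exhibition: for the trivial type it is automatic ($c_1 = $ const $\neq 0$ for any channel); for a type $S_e I_j$ one picks $T_j$ to be (close to) an identity/swap channel routing $S_e$ into $I_j$ and tracing out the other parents, which manifestly carries such a term; products of these types are handled by a channel that combines the relevant systems. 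Since a nonzero polynomial on a full-dimensional (semialgebraic) parameter set has zero-measure zero set with respect to any non-singular measure, the theorem follows.

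The main obstacle is the bookkeeping in the middle step: making precise that the coefficient of an implied type in $\Phi(T_1,\dots,T_n)$ genuinely factorises across laboratories and that the local factors are the HS coefficients of the individual channels $T_j$ — this requires being careful that a ``type $\tau$'' global term decomposes uniquely into local types $\tau_j$ on the factors $\ps_j\otimes I_j$, using that the $\ps_j$ are built from the source spaces $S_e$ and that each $S_e$ appears in exactly one output space $O_k$ and is fed into exactly one input space $I_j$. Once that local/global correspondence is pinned down, the genericity argument is the standard ``zero set of a nonzero polynomial is null'' fact, and the only remaining work — exhibiting, laboratory by laboratory, a channel realising each locally-implied type — is routine and can be done with identity and partial-trace channels as above.
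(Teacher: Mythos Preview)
Your argument is sound and, in fact, more careful than the paper's own three-sentence proof. The paper simply declares that the $G$-factorising process matrices are ``parametrised by $m$ real parameters, where $m$ is the number of HS terms implied by $G$,'' treats a non-singular measure as Lebesgue on $\mathbb{R}^m$, and observes that fine-tuning (vanishing of some implied HS coefficients) cuts out coordinate hyperplanes. This glosses over the tensor-product constraint: the implied HS coefficients of $W=\bigotimes_j T_j$ are not independent coordinates but \emph{products} of the local HS coefficients of the $T_j$, so the set of $G$-factorising $W$'s is a lower-dimensional variety inside that $\mathbb{R}^m$, not the full space. Your parametrisation by $(T_1,\dots,T_n)\in\prod_j\mathcal{C}_j$ is the honest fix, and your reduction to ``the zero set of a nonzero polynomial is null'' supplies exactly the substance the paper's proof is gesturing at. What the paper's version buys is brevity; what yours buys is correctness of the parametrisation.

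One imprecision to tighten: you write that the model is ``fine-tuned \emph{precisely} when at least one implied $c_\tau$ vanishes,'' with $c_\tau$ the coefficient of a single HS basis element. By the paper's Definition of faithfulness, fine-tuned means that for some implied \emph{type} \emph{every} HS basis element of that type has zero coefficient in $W$; a single vanishing coefficient is not enough. This does not break your proof---your locus $\{F=0\}$ still contains the fine-tuned set and is still null---but the equivalence claim is false as stated. If you unpack the product structure you get something cleaner anyway: the type-$\tau$ block of $W$ vanishes iff, for the factorisation $\tau=\prod_j\tau_j$, some $T_j$ has its entire (locally allowed) type-$\tau_j$ block equal to zero. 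Hence the fine-tuned locus is exactly a finite union of proper \emph{linear} subspaces of $\prod_j\mathcal{C}_j$, which makes the measure-zero conclusion immediate and renders the explicit identity/partial-trace channel constructions unnecessary.
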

\begin{proof}
The set of process matrices that factorise over $G$ can be parametrised by $m$ real parameters, where $m$ is the number of HS terms implied by $G$ (excluding the term of type $1$, which has a constant coefficient).  A non-singular measure on this set is a measure on $\mathbb{R}^m$ that is non-singular with respect to the Lebesgue measure. The set of fine-tuned process matrices is defined by having exactly vanishing coefficients for some of the implied HS terms, which is a zero-measure set for every non-singular measure.
\end{proof}
Note that there is only a finite number of DAGs with the same vertex set. Thus, a nonsingular measure over the space of causal models for a given set of laboratories decomposes as a finite sum of nonsingular measures for the individual DAGs. This implies the following corollary:
\begin{corollary}
Given a non-singular measure on the space of MQCMs for a given set of laboratories ${\cal L}=\left\{L_j=I_j\otimes O_j\right\}_{j=1}^n$, the set fine-tuned causal models has measure zero.
\end{corollary}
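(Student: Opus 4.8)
The plan is to reduce the statement to the preceding theorem by slicing the space of MQCMs according to causal structure. First I would observe that, for a fixed finite vertex set ${\cal L}$, there are only finitely many DAGs $G = \langle {\cal L}, {\cal E}\rangle$, since ${\cal E}$ is a subset of the finite set ${\cal L}\times{\cal L}$; and for each such $G$ there are only finitely many admissible source-space assignments, because the dimensions $d_{S_e}$, $e\in\out_j$, must form a factorisation of the fixed integer $d_{O_j}$. Hence the space of MQCMs for ${\cal L}$ is a \emph{finite} disjoint union of pieces ${\cal M}_G$, one for each choice of $G$ together with a compatible source-space decomposition, where ${\cal M}_G$ consists of the pairs $\langle G, W\rangle$ with $W$ of the form \eqref{quantummarkov}. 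As in the proof of the previous theorem, ${\cal M}_G$ is parametrised by the HS coefficients of the terms implied by $G$, i.e.\ it is (a subset of) $\mathbb{R}^{m_G}$ with $m_G$ the number of such terms.

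Next I would make precise the decomposition of the measure asserted in the text before the corollary. A non-singular measure $\mu$ on the space of MQCMs is one that vanishes on every set that is null for the natural reference measure $\lambda$ obtained by patching together Lebesgue measure $\lambda_G$ on each parameter block ${\cal M}_G$. Restricting $\mu$ to a single block gives a measure $\mu_G$ which is again non-singular with respect to $\lambda_G$: a $\lambda_G$-null subset of ${\cal M}_G$ is $\lambda$-null, hence $\mu$-null. Therefore $\mu = \sum_G \mu_G$ is a finite sum of non-singular measures, one per block.

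Now I would apply the previous theorem block by block. Within ${\cal M}_G$, a model $\langle G, W\rangle$ is fine tuned exactly when it is not faithful, i.e.\ when at least one HS term implied by $G$ has a vanishing coefficient; in the $\mathbb{R}^{m_G}$ parametrisation this is a finite union of coordinate hyperplanes, which is $\lambda_G$-null and hence $\mu_G$-null by the previous theorem. (Note that this null set also contains any $W$ that factorises non-faithfully over $G$ while being faithful to some other DAG, so no fine-tuned model escapes the count.) Since the set of all fine-tuned MQCMs is the union, over the finitely many blocks, of these null sets, its $\mu$-measure is $\sum_G \mu_G(\{\text{fine tuned in } {\cal M}_G\}) = 0$.

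The argument is largely bookkeeping once the previous theorem is available; the only step needing a moment's care is the first — verifying that the index set (DAGs together with compatible source-space decompositions) is genuinely finite, so that the decomposition of $\mu$ into non-singular summands is legitimate and finite additivity closes the proof. I do not expect any real obstacle beyond this.
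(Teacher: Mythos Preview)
Your proposal is correct and follows essentially the same approach as the paper: the paper simply notes that there are finitely many DAGs on a fixed vertex set, so a nonsingular measure on the space of MQCMs splits as a finite sum of nonsingular measures over individual DAGs, to each of which the preceding theorem applies. Your version is more explicit (you also track the finitely many source-space decompositions and spell out why each restricted measure remains nonsingular), but the underlying argument is the same.
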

The above results indicate that, given a set of possible causal explanations, a faithful one should be preferred because, unless additional information is available, a vanishing probability is assigned to fine-tuned models.

In the presence of latent laboratories, faithfulness might not be sufficient to single out a unique causal model. In this case, further assumptions would be needed to decide among a set of faithful causal explanations.

\section{Classical limit}\label{limit}
It is important that classical causal models can be recovered from quantum ones in an appropriate limit. To ensure we are recovering classical models that can be considered \emph{causal}, it is necessary to show that these models support identification via classical interventions. Recall, this is the hallmark of the classical causal modelling methodology. As it will be shown in this section, to recover classical causal models it is sufficient that all available local operations can be described as classical interventions. The particular case of classical models conditioned on `idle' interventions, or equivalently `purely observational' Bayesian networks, is recovered by further restricting local operations to projective measurements in a fixed basis\blue{\footnote{Indeed, classical non-invasive measurements correspond, in the quantum formalism, to projective measurements in a fixed basis: as repeating such measurements confirms previously obtained outcomes, they can be interpreted as revealing pre-existing properties of the system without disturbing it.}.}
It is crucial that \emph{all} classical causal structures can be recovered in this way. This was not possible, for example, in the framework of Ref.~\cite{pienaar2014graph}, where only a subset of DAGs was recovered in the classical limit.

Quantum systems effectively behave as classical ones in the limit where it is only possible to access and manipulate states in a fixed basis that factorises over separated systems (called \emph{pointer basis}). 
Whether this condition is enforced by decoherence \cite{ZurekPointer81}, collapse models \cite{Ghirardimodel85}, `fuzzy measurements' \cite{PeresEmergence92}, or in other ways, will not be discussed here. Rather, it will be shown that, provided the transition to classicality takes place, Markov quantum causal models reduce to classical ones.

For a local laboratory $L_j=I_j\otimes O_j$ in an MQCM, with output space factorised according to the outgoing edges, $O_j=\bigotimes_{e\in\out_j}S_e$, consider CP maps of the form
\begin{equation}
\overbar{M}_{x_j|i_j}^{I_jO_j}= \sum_{z_j \vec{o\,}_j} P(\vec{o}_j, x_j|z_j, i_j)\,\proj{z_j}^{I_j}\otimes\proj{\vec{o}_j}^{O_j},
\label{diagonal}
\end{equation}
where $x_j$ label the possible measurement outcomes, $i_j$ label the possible choices of instruments, $\vec{o}_j:=\left\{s_e\right\}_{e\in\out_j}$ and $\proj{\vec{o}_j}^{O_j} := \bigotimes_{e\in\out_j}\proj{s_e}^{\opa_e}$. For a fixed basis, CP maps of this form can be interpreted as follows: a classical system enters the laboratory in some state labelled by $z_j$, a value $x_j$ is recorded as a measurement outcome and, for each outgoing edge $e$, a system is sent out of the laboratory in the state $s_e$. The parameter $i_j$ determines the conditional probability for $x_j$ and $\vec{o}_j$ given $z_j$.

In this interpretation, $x_j$ are the observed variables and $i_j$ are intervention variables. $z_j$ and $\vec{o}_j$, on the other hand, are latent variables, as they are not observed directly. In order to recover a causal model for the variables $x_j$, without such additional latent variables, it is necessary to further restrict the possible local operations. In particular, the variables $\vec{o}_j$, responsible for transmitting causal influence out of the laboratory, should be influenced directly only by the observed variables $x_j$ (and only indirectly by $i_j$ and $z_j$):
\begin{equation}
P(\vec{o}_j, x_j|z_j, i_j) = P(\vec{o}_j| x_j) P(x_j|z_j, i_j).
\label{anotherprobability}
\end{equation}

Given the above conditions, the statistics generated by a Markov quantum causal model are equivalent to those generated by a classical causal model \emph{with the same causal structure}. The proof of the following theorem is in Appendix \ref{prooflimit}.
\begin{theorem}
\label{limittheorem}
Given an MQCM $\left\langle G,W\right\rangle$ for a set of laboratories \lab, the conditional probability
\begin{multline}
P(x_1,\dots,x_n|i_1,\dots,i_n)  \\
= 
\tr\left[\bigotimes_{j=1}^n \overbar{M}_{x_j|i_j}^{L_j}\cdot W^{L_1\dots L_n}\right]
\label{diagonalprob}
\end{multline}
obtained from CP maps of the form \eqref{diagonal} with probabilities of the form \eqref{anotherprobability},
satisfies the causal Markov condition \eqref{causalmarkov} for a DAG $G_c$ isomorphic to $G$.  $G_c$  has random variables $X_1,\dots,X_n$ as vertices, where $X_j$ takes as values the labels $x_j$ of the local measurement outcomes.
\end{theorem}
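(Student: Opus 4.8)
\textbf{Proof proposal for Theorem~\ref{limittheorem}.}
The plan is to substitute the explicit diagonal CP maps \eqref{diagonal} into the generalised Born rule \eqref{diagonalprob}, using the Markov form \eqref{quantummarkov} of the process matrix, and show that the trace collapses to a product of conditional probabilities indexed by the DAG structure. First I would recall that $W^{L_1\dots L_n} = \bigotimes_{j=1}^n T_j^{\ps_j I_j}\otimes\id^{O_D}$, and that each channel $T_j$ connects the parent space $\ps_j = \bigotimes_{e\in\IN_j}\opa_e$ to $I_j$. The key observation is that each map $\overbar{M}_{x_j|i_j}^{I_jO_j}$ is diagonal in the pointer basis on both $I_j$ and $O_j$; hence when we take the trace against $W$, only the diagonal entries of each $T_j$ in the pointer basis survive. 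I would define the classical stochastic matrix attached to edge-bundles by $\pr(\vec z_j \,|\, \vec s^{(j)}) := \bra{\vec z_j}{\cal T}_j(\proj{\vec s^{(j)}})\ket{\vec z_j}$ — i.e. the pointer-basis matrix elements of the channel $T_j$ — where $\vec s^{(j)}$ is the tuple of pointer values on the incoming source spaces $\{S_e\}_{e\in\IN_j}$, which by construction is exactly the tuple of output labels $\{\vec o_k : (L_k,L_j)\in{\cal E}\}$ restricted to the relevant factors. Because $T_j$ is trace-preserving, $\sum_{\vec z_j}\pr(\vec z_j\,|\,\vec s^{(j)}) = 1$, so this is a genuine conditional probability.

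Next I would carry out the trace factor by factor. Since everything is diagonal, $\tr[\bigotimes_j \overbar{M}_{x_j|i_j}\cdot W]$ becomes a sum over all pointer-basis labels $\{z_j\}$ and all output labels $\{\vec o_j\}$ of the product $\prod_j P(\vec o_j, x_j \,|\, z_j, i_j)\cdot \prod_j \pr(z_j \,|\, \vec s^{(j)})$, with the identity on $O_D$ contributing trivially (the outputs of childless laboratories are simply summed out). Now I apply the factorisation assumption \eqref{anotherprobability}: $P(\vec o_j, x_j\,|\,z_j,i_j) = P(\vec o_j\,|\,x_j)P(x_j\,|\,z_j,i_j)$. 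The variables $\vec o_j$ appear in two places: in $P(\vec o_j\,|\,x_j)$ and, through the identification $\vec s^{(k)} = (\text{relevant components of }\{\vec o_j\})$, inside the various $\pr(z_k\,|\,\vec s^{(k)})$ for children $L_k$ of $L_j$. Summing over $\vec o_j$ and over the latent $z_j$ (the latter weighted by $P(x_j\,|\,z_j,i_j)$ and the incoming $\pr(z_j\,|\,\vec s^{(j)})$) produces, for each $j$, a kernel that depends only on $x_j$, on the output labels of the parents of $L_j$, and on $i_j$ — and after the $\vec o_j$-sums are pushed through, only on the $x$-values of the parents (since $\vec o_j$ is a function, probabilistically, of $x_j$ alone). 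Collecting terms, the whole expression rearranges into $\prod_{j=1}^n P(x_j\,|\,pa_j, i_j)$ for suitably defined conditionals, where $pa_j$ is the tuple $(x_k : L_k\in\pa_j)$. This is precisely the causal Markov condition \eqref{causalmarkov} for the DAG $G_c$ with vertices $X_1,\dots,X_n$ and an edge $X_k\to X_j$ whenever $(L_k,L_j)\in{\cal E}$, so $G_c\cong G$.

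The main obstacle I anticipate is bookkeeping the identification between the output labels $\vec o_j = \{s_e\}_{e\in\out_j}$ and the parent-space labels $\vec s^{(k)}$ of the children: each outgoing edge $e\in\out_j$ carries one factor $S_e$ of $O_j$, and that same $S_e$ sits inside $\ps_k$ for the unique child $L_k$ with $e\in\IN_k$, so the sum over $\vec o_j$ must be correctly distributed across possibly several children. I would handle this by indexing everything by edges rather than by laboratories where convenient, so that $\sum_{\vec o_j} = \prod_{e\in\out_j}\sum_{s_e}$ and each $\sum_{s_e}$ feeds exactly one child channel. A secondary subtlety is making rigorous the claim that, after marginalising $\vec o_j$ against $P(\vec o_j\,|\,x_j)$, the induced kernel $P(x_j\,|\,pa_j,i_j) := \sum_{z_j}\sum_{\text{parents' }\vec o}\big[\prod_{e\in\IN_j}P(s_e\,|\,x_{k(e)})\big]\,\pr(z_j\,|\,\vec s^{(j)})\,P(x_j\,|\,z_j,i_j)$ is normalised in $x_j$; this follows from trace-preservation of $T_j$ together with $\sum_{x_j}P(x_j\,|\,z_j,i_j)=1$ (which holds because the $\overbar{M}_{x_j|i_j}$ form an instrument). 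Once these indexing and normalisation points are pinned down, the rest is a direct rearrangement of sums, and the isomorphism $G_c\cong G$ is immediate from the construction of the edge set.
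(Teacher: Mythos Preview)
Your proposal is correct and follows essentially the same route as the paper's proof in Appendix~\ref{prooflimit}: extract the pointer-basis diagonal of each channel $T_j$ as a classical conditional $P(z_j\mid\psv_j)$, reduce the Born rule to a classical sum--product over the labels $\{z_j,\vec o_j\}$, apply condition~\eqref{anotherprobability}, and then reorganise the sum edge-by-edge to obtain $\prod_j P(x_j\mid pa_j,i_j)$. Your anticipated ``main obstacle''---indexing by edges rather than by laboratories so that each $\sum_{s_e}$ feeds exactly one child---is precisely the manoeuvre the paper uses in its final display.
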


It is also possible to show that each classical causal model can be reproduced as the classical limit of an MQCM with the same causal structure, i.e.\ with an isomorphic DAG. Replace each random variable $X_j$ with a local laboratory $L_j=I_j\otimes O_j$, where $I_j$ has a basis element $\ket{x_j}$ for each value of $X_j$ and $O_j$ is the tensor product of a copy of $I_j$ for each outgoing edge. The process matrix is then 
\begin{equation}
\begin{split}
W =&\bigotimes_{j=1}^n T_j^{\ps_j I_j},\\
T_j^{\ps_j I_j} =&  \sum_{z_j \psv_j} P(z_j | \psv_j) \proj{\psv_j}^{\ps_j}\otimes\proj{z_j}^{I_j}, 
\end{split}
\label{classicalW}
\end{equation}
with coefficients $P(z_j|\psv_j)=\pr(X_j=z_j|P\!A_j=\psv_j, I_j=\textrm{idle})$ given by the classical model. The `observed' probability distribution (where no interventions are made, from the perspective of the classical model) is obtained when all instruments are restricted to be projective measurements in the pointer basis:
\begin{equation}
M^{L_j}_{x_j} = \proj{x_j}^{I_j}\bigotimes_{e\in\out_j}\proj{x_j}^{\opa_e}.
\label{natural}
\end{equation}

Interventions in the classical model correspond to more general diagonal operations of the form \eqref{diagonal}.
In particular, `do' interventions are realised by ignoring the input and preparing the chosen value on the output space: $M^{L_j}_{\textrm{do}(x_j)} = \id^{I_j}\bigotimes_{e\in\out_j}\proj{\textrm{do}(x_j)}^{\opa_e}$. The split of classical nodes into input and output reproduces  the \emph{single node intervention graphs} (SWIGs), studied in the context of classical causal models \cite{richardsonsingle2013}.

The result above not only shows that classical causal models can be subsumed by quantum ones, but also provides a direct way to apply classical tools to quantum models. Indeed, classical models are recovered under restrictions on the performed operations. Such restrictions can arise because of uncontrolled factors, such as environmental decoherence, but it is also possible to simply \emph{choose} to use only instruments compatible with a classical description. Thus, all properties of classical causal models directly extend to quantum ones, conditioned on the choice of classical instruments. For example, in classical causal models, if $X_3$ is the only common cause of $X_1$ and $X_2$, it follows from the causal Markov condition that $X_1$ and $X_2$ are conditionally independent given $X_3$: $P(X_1,X_2|X_3) = P(X_1|X_3)P(X_2|X_3)$, where it is understood that such a property holds conditioned on the `idle' intervention. Such a property still holds for quantum causal models, conditioned on the choice of appropriate instruments: If $L_3$ is a quantum common cause of $L_1$ and $L_2$, then $P(M^{L_1}_1,M^{L_2}_2|M^{L_3}_3,{\cal J}^{L_1}_1,{\cal J}^{L_2}_2,{\cal J}^{L_3}_3)=P(M^{L_1}_1|M^{L_3}_3,{\cal J}^{L_1}_1,{\cal J}^{L_3}_3)P(M^{L_2}_2|M^{L_3}_3,{\cal J}^{L_2}_2,{\cal J}^{L_3}_3)$ for all instruments ${\cal J}^{L_1}_1,\dots$ satisfying conditions \eqref{diagonal}, \eqref{anotherprobability}.

\section{Beyond definite and acyclic causal structures}\label{beyond}
The results of the previous section show that the causal structure of a Markov process, represented by a DAG, remains unaltered in the quantum-to-classical transition. Thus, the quantum causal models discussed above are `quantum' only by virtue of the physics describing the systems that convey causal influence. The \emph{causal structure} itself is not different to that of classical models. 

The formalism, however, can be naturally extended to more general structures. \emph{Probabilistic mixtures} of causal structures are a natural extension. Owing to the linearity of the generalised Born rule, Eq.~\eqref{born}, these are represented as
\begin{equation}
W=\sum_G p(G) W^{(G)},\quad \sum_{G}P(G)=1, \quad P(G)\geq 0,
\label{mixture}
\end{equation}
where the sum is over DAGs $G$ and $W^{(G)}$ is a process matrix that factorises over $G$.

A process of the form \eqref{mixture} is still compatible with a definite, albeit unknown, causal structure, which can be understood as an objective property that exists independently of the operations performed. The natural next step is to look for causal structures that might be, in some sense, \emph{indefinite}. That space-time itself might have quantum properties in a theory of quantum gravity motivates this idea \cite{hardy2007towards, zych14}.

An example is the `quantisation' of the mixture \eqref{mixture}. Given a set of laboratories \lab, introduce two additional laboratories, $C\equiv C_O$ (with trivial input) and $D\equiv D_I$ (with trivial output). Let $W^{(G)}=\proj{w^{(G)}}$ be a rank-one projector which factorises over the DAG $G$ (this is the case if all mechanisms are unitary matrices or pure-state preparations). A \emph{quantum-controlled} causal structure is then given by the process matrix
\begin{equation}
W=\proj{w}, \quad \ket{w} = \sum_G \ket{G}^{C_O}\otimes \ket{G}^{D_I}\otimes \ket{w^{(G)}},
\label{controldag}
\end{equation}
where the basis vectors $\ket{G}$ range over the set of controlled DAGs. If $C$ prepares a state $\ket{G}^{C_O}$, the laboratories in ${\cal L}$ find themselves in the causal relations dictated by $G$ (and the control system is transferred undisturbed to $D$). However, if $C$ prepares an arbitrary superposition $\sum_G \psi_G \ket{G}^{C_O}$, and $D$ also measures in a superposition basis, one can expect causal relations that are incompatible with any DAG\footnote{This should not be expected, however, when the superposition only involves DAGs with the same partial order between laboratories, as this results in a causally ordered process matrix, which has a Markovian causal explanation, see Sec.~\ref{latent}.}. Indeed, an example of quantum-controlled causal structure is the \emph{quantum switch} \cite{chiribella09, colnaghi11}, where the above sum ranges over DAGs of type $L_{\sigma(1)}\rightarrow L_{\sigma(2)}\dots \rightarrow L_{\sigma(n)}$, with $\sigma$ a permutation of $n$ elements. This type of resource can be shown to be incompatible with any definite causal order \cite{araujo15, oreshkov15} and thus in particular with any DAG. Additionally, the quantum switch can provide an advantage for several tasks \cite{chiribella12, araujo14, feixquantum2015} and an experimental proof-of-principle was recently demonstrated \cite{procopio_experimental_2014}, showing the potential practical relevance of quantum causal structures. 

The process \eqref{controldag} can be interpreted as the superposition of different amplitudes, each corresponding to a directed, acyclic causal structure. It is an interesting question whether more general causal models can also be understood in a similar way. An indication that this might not be the case is the fact that processes of the form \eqref{controldag} cannot violate \emph{causal inequalities} \cite{araujo15, oreshkov15} whereas more general processes allowed by the formalism can \cite{oreshkov12, branciard15}. Intriguingly, this is also true for classical systems: there exist causal models, locally compatible with classical physics, that are incompatible with any causal order or mixture of causal orders \cite{baumeler14, baumelerspace2015}. 

The possibility of closed time-like curves (CTCs) in general relativity \cite{morris1988wormholes} motivates considering models that contain directed cycles. The process matrix formalism provides a natural framework for studying CTCs as cyclic causal structures, as the possibility of interventions, and thus the causal interpretation of the model, does not rely on the acyclicity property. It is not clear, however, whether indefinite and acyclic causal structures are in fact separate concepts. Furthermore, it is unclear whether and how core notions such as the causal Markov condition and faithfulness should be generalised beyond definite, acyclic causal structures.

\section{Conclusions}
This work has foundational implications insofar as it shows that quantum mechanics has a causal interpretation in a similar manner to classical mechanics. Cause-effect relations are identified with correlations between controlled and observed events, and a causal structure is a set of transformations that define a DAG.

The findings presented here do not reproduce the wealth of results from the literature on classical causal models. However, the surprising similarity between the frameworks suggests that further techniques from classical causal modelling may be applicable to the quantum case. ‘Quantising’ causal models ought to be a rich and promising field of research.

Causal discovery plays a prominent role in classical machine learning and ought to play a similarly pivotal role in the the emerging field of quantum machine learning \cite{Schuldintroduction15}. The mathematical formalism for quantum causal discovery introduced here can provide a foundation for further development in this direction.

\begin{acknowledgments}
We are grateful to Peter Evans, Gerard Milburn, and Magdalena Zych for discussions and feedback.
This work was supported in part by the Templeton World Charity Foundation (TWCF 0064/AB38).
\end{acknowledgments}

\appendix

\section{Proof of theorem \ref{faithfultheorem}}
\label{faithfulproof}
If $L_k$ is a parent of $L_h$ then, by virtue of faithfulness, $W$ contains terms of type $O_kI_h$. Given an arbitrary set of CPTP maps $C_j^{L_j}$, $j\not=k,h$, their tensor product contains the term of type $1$, which implies that the reduced process
\begin{equation*}
\overline{W}^{L_kL_h} =\tr_{L_j\not=L_k, L_h}\left[\bigotimes_{j\not=k,h}C_j^{L_j}\cdot W\right]
\end{equation*}
also contains a term of type $O_kI_h$. As such a term can be used for signalling, $L_k$ has causal influence over $L_h$
for any instruments in the remaining laboratories, thus $L_k$ is a direct cause for $L_h$.
Conversely, assume that $L_k$ is not a parent of $L_h$. Then, every HS term nontrivial on  $O_k$ must also be nontrivial on an input $I_j$ such that $L_k$ is a parent of $L_j$ (and thus $j\not= h$). Let $\sigma^{O_kI_jX}$ be the corresponding element of the HS basis (where $X$ is any additional subsystem on which $\sigma$ is non-trivial); then, for the CPTP map $C^{L_j}=\id^{L_j}/{d_{O_j}}$, and recalling that HS basis elements have null partial trace for any subsystem in which they are not trivial,
\begin{equation*}
\tr_{L_j}C^{L_j} \sigma^{O_kI_jX} \equiv \tr_{L_j} \sigma^{O_kI_jX} /{d_{O_j}}=0,
\end{equation*}
which means that the reduced process matrix obtained when $L_j$ performs the maximally noisy channel does not contain the HS term $O_kI_jX$. If the maximally noisy channel is performed in every laboratory that has $L_k$ as a parent, then the resulting reduced process does not contain any term of the form $O_kX$, which means that no signalling from $L_k$ is possible in the reduced model. Thus, $L_k$ is not a direct cause of any laboratory of which it is not a parent.

\section{Proof of theorem \ref{limittheorem}}
\label{prooflimit}
The probabilities \eqref{diagonalprob} are equivalently generated by the diagonal process matrix
\begin{equation}
\begin{split}
\overline{W}= \sum_{z_1 \vec{o\,}_1}\cdots \sum_{z_n \vec{o\,}_n} \tr\left(\proj{\vec{v}}\cdot W\right) \proj{\vec{v}}, \\
\proj{\vec{v}}= \bigotimes_{j=1}^n 
 \proj{z_j}^{I_j}\otimes\proj{\vec{o}_j}^{O_j}.
\end{split}
\label{Wdiagonal}
\end{equation}
Using the quantum causal Markov condition \eqref{quantummarkov}, the above matrix can be written as a product of CPTP maps diagonal in the pointer basis:
\begin{equation}
\begin{split}
\overline{W} =&\bigotimes_{j=1}^n \overline{T}_j^{\ps_j I_j},\\
\overline{T}_j^{\ps_j I_j}:=&  \sum_{z_j \psv_j} P(z_j | \psv_j) \proj{\psv_j}^{\ps_j}\otimes\proj{z_j}^{I_j}, \\
P(z_j | \psv_j) :=&\tr\left(\proj{\psv_j}^{\ps_j}\otimes\proj{z_j}^{I_j}\cdot T_j^{\ps_j I_j}\right),
\end{split}
\label{diagonalW}
\end{equation}
with $\psv_j :=\left\{s_e\right\}_{e\in \IN_j}$, $\proj{\psv_j}^{\ps_j}:= \bigotimes_{e\in\IN_j}\proj{s_e}^{\opa_e}$.
For diagonal operations, the generalised Born rule reduces to the composition of probabilities:
\begin{multline}
P(\textbf{x}|\textbf{i}) 
= \tr \left[\bigotimes_j \overbar{M}^{L_j}_{x_j|i_j}\cdot W\right] 
\\ 
= \tr \left[\bigotimes_j \overbar{M}^{L_j}_{x_j|i_j}\cdot\overline{W}\right] 
\\
= \sum_{\textbf{z}\, \textbf{o}} P_{L}(\textbf{o}, \textbf{x}|\textbf{z}, \textbf{i})P_{W}(\textbf{z}|\textbf{o}),
\label{classicalcomposition}
\end{multline} 
where $\textbf{o} := \left\{\vec{o}_j\right\}_{j=1}^n \equiv \left\{\psv_j\right\}_{j=1}^n$, 
$\textbf{x} := \left\{x_j\right\}_{j=1}^n$, $\textbf{i} := \left\{i_j\right\}_{j=1}^n$, $\textbf{z} := \left\{z_j\right\}_{j=1}^n$,
and
\begin{equation}
\begin{split}
P_{L}(\textbf{o}, \textbf{x}|\textbf{z}, \textbf{i}) :=& \prod_{j=1}^n P(\vec{o}_j, x_j|z_j, i_j),\\
P_{W}(\textbf{z}|\textbf{o}) :=& \prod_{j=1}^n P(z_j | \psv_j)
\end{split}
\label{details}
\end{equation}
are the conditional probabilities corresponding to the coefficients of local operations \eqref{diagonal} and mechanisms \eqref{diagonalW}, respectively.
Combining the expressions above, and using condition \eqref{anotherprobability}, gives 
\begin{multline}
P(x_1,\dots,x_n|i_1,\dots,i_n)  \\
= \sum_{\textbf{o}} \prod_{j=1}^n \left[\sum_{z_j} P(\vec{o}_j, x_j|z_j, i_j) P(z_j | \psv_j)\right] \\
= \sum_{\textbf{o}} \prod_{j=1}^n P(\vec{o}_j| x_j)P(x_j|\psv_j, i_j) \\
= \sum_{s_e,\,e\in{\cal E}} \prod_{j=1}^n P(\left\{s_e\right\}_{e\in \out_j}| x_j)P(x_j|\left\{s_e\right\}_{e\in \IN_j}, i_j) \\
= \prod_{j=1}^n P(x_j|pa_j, i_j),
\label{quantumtoclassical}
\end{multline}
where $pa_j:=\left\{x_k|(L_k, L_j)\in {\cal E}\right\}$ is the set of outcomes associated with the parents of node $j$ with respect to the original graph $G$.

\small


\raggedleft



\end{document}